\documentclass[12pt]{article}
\usepackage{amsmath, amssymb, amsthm, amsfonts, mathtools, bm}
\usepackage{geometry}
\usepackage{physics}
\newtheorem{theorem}{Theorem}

\newtheorem{proposition}{Proposition}
\newtheorem{definition}{Definition}
\newtheorem{example}{Example}
\newtheorem{corollary}{Corollary}

\title{Quantum Simulation of Hyperbolic Equations and  the Nonexistence of a Dirac Path Measure}

\author{\bf Sumita Datta $^{1,2}$\\
$^1$ Alliance School of Applied Mathematics, Alliance University,\\ Bengaluru 562 106, India\\
$^2$ Department of Physics, University of Texas at Arlington,\\Texas 76019, USA\\}
\date{\today}

\begin{document}
\maketitle

\begin{abstract}
We revisit the longstanding issue of why no well-defined probability measure
exists corresponding to a classical (Kolmogorov) path integral representation
of the Dirac equation in Minkowski space. Two complementary perspectives are compared:
(i) Zastawniak's observation that the distributional character of the Dirac
propagator (presence of derivatives of the delta distribution) obstructs the
construction of a nonnegative transition kernel, and (ii) the indefinite
signature of the Minkowski metric which prevents positivity of the action
and yields oscillatory integrals. We show how these viewpoints can be unified
as different manifestations of a single mathematical obstruction from measure theoretical point of view, and we
discuss consequences for stochastic representations of relativistic first-order
equations.

\end{abstract}
\newpage
\section{Introduction}
In this work we emphasize on two complementary, measure-theoretic obstructions for constructing a path-space measure 
in the case of a Dirac-like hyperbolic equation:   
\begin{enumerate}
  \item A \emph{Minkowski-signature} obstruction: the indefinite Lorentzian
        metric leads to hyperbolic operators, oscillatory functionals $e^{\mathrm{i}S}$,
        and Euclidean actions whose exponentials are not positive, so no
        Feynman--Kac-type probabilistic representation exists.
  \item A \emph{Zastawniak-type} obstruction: the Dirac propagator is a
        distribution involving derivatives of the delta distribution, which
        cannot be realized as a nonnegative transition kernel on classical path
        space and is incompatible with Kolmogorov's extension theorem.
\end{enumerate}
Zastawniak originally formulated the problem in analytic terms: initial data
for the Dirac equation\cite{Dirac1928},\cite{ItzyksonZuber},\cite{PeskinSchroeder1995},\cite{Ryder1996} enter through derivative operators, and the associated
fundamental solutions are generalized functions rather than measures. Our aim
here is to recast and extend these results from a \emph{probabilistic} point
of view, making explicit the roles of Markov semigroups\cite{Simon2005},
\cite{GlimmJaffe1987}, Kolmogorov\cite{Kolmogorov1933} consistency,
Wiener geometry, and the \\ 
Bochner--Minlos\cite{Bochner1955,Minlos1963} framework.

We also discuss scalar (Klein--Gordon) fields\cite{BjorkenDrell}, which admit subordinated Brownian
representations, and hyperbolic telegrapher-type equations, which admit
velocity-jump process representations. These will later be used as benchmarks
for numerical simulations of hyperbolic equations. In contrast, fermionic
(Dirac) fields require Grassmann variables and Berezin integration rather than
classical probability measures.

In relativistic quantum theory the kinetic operator depends crucially on the
metric signature. The Minkowski\cite{Minkowski1910} signature $(+,-,-,-)$ produces hyperbolic
differential operators and oscillatory functional integrals of the form
$e^{\mathrm{i}S}$, whereas Wick rotation\cite{GlimmJaffe1987},\cite{StreaterWightman},\cite{Schwinger1951} $t\mapsto -\mathrm{i}\tau$ converts
the action to a Euclidean form $S_E$ and makes $e^{-S_E}$ suitable for
constructing probability measures in many scalar cases.

For bosonic, second-order parabolic operators one can often construct Gaussian
(Wiener) measures\cite{Wiener1923},\cite{Doob1953} and Feynman--Kac-type representations\cite{Feynman1948},\cite{Feynman1965},\cite{Donsker1950}, \cite{Kac1959}. For first-order
relativistic Dirac-type equations, several obstructions appear:
\begin{itemize}
  \item the propagator is a distribution involving derivatives of $\delta$;
  \item the Euclidean action does not give a positive weight $e^{-S_E}$;
  \item spinor fields are fermionic and require Grassmann variables;
  \item the hyperbolic character is incompatible with Brownian geometry.
\end{itemize}

Zastawniak's work \cite{Zastawniak1988,Zastawniak1989,Zastawniak1990,Zastawniak1994},
and other work along this line\cite{DeWittMoretteFoong1989},\cite{Brzeźniak1989}
emphasized analytic aspects: fundamental solutions for the Dirac equation are
generalized functions acting on initial data through derivatives, and no
path-space measure exists whose finite-dimensional marginal densities generate
the Dirac propagator. In this paper we give a probabilistically oriented
presentation of this nonexistence result, organized around:
\begin{enumerate}
	\item a \emph{probability-theoretic framework}: Markov semigroups, Feller\cite{FellerV1}
        processes, and Kolmogorov's extension theorem;
  \item the \emph{Minkowski and Euclidean} structures;
  \item the \emph{distributional} structure of Dirac propagators and its
        conflict with classical transition kernels.
\end{enumerate}

In later work, telegrapher equations\cite{Kac1974}, \cite{GriegoHersh1971}, 
\cite{GriegoHersh1969},\cite{Hersh1974},\cite{Hersh1975},\cite{Hersh2006},\cite{Gaveau1984} and the Kac velocity-jump process will
serve as a positive benchmark: they are hyperbolic, admit stochastic
representations, and can be simulated numerically. The contrast to the Dirac
case highlights the genuinely fermionic and non-probabilistic nature of Dirac
path integrals.\\

The organization of the paper is as follows.
Section~1 introduces the problem concerning the nonexistence of a Dirac path measure.
Section~2 presents the probabilistic framework employed to analyze this nonexistence problem.
Sections~3 and~4 discuss, respectively, the Minkowski-space obstruction and the Zastawniak obstruction
to the existence of a Dirac path measure.
Section~5 compares the measure-theoretic issues arising for scalar fields with those for fermionic fields.
Section~6 formulates a unified no-go theorem addressing the nonexistence of such measures.
Section~7 relates the present results to earlier work in the literature.
The paper concludes in Section~8.

\begin{table}[h!]
\begin{center}
\caption{\bf Notation Table}
\begin{tabular}{ccc}
\hline
Notation/Phrase & Meaning \\
$L$   & Banach space   \\
	$\tilde{T}(t)$     & A strongly continuous semigroup    \\
	$ A $  & Infinitesimal generator of $\tilde{T}(t)$ \\           
	$\mathsf{D}(A)$     & Domain of A   \\
$E $  & State space   \\
$S$ & Action  \\
$ T$  & index set for time t  \\
$X(t) $   & Stochastic Process  \\
$\mathbb{E}_x $  & Expectation value \\
$\mathbb{P}$ & Probability measure \\
	$\mathcal{C}_{(0,T)}$ & the classes of cylinder subset  \\
	${\mu}_{w}^{x_0} $ & Wiener measure on a Brownian path staring at $x_0$ \\
$ B_t$ & Brownian motion  \\
$ \mathbb{D}_E$ & Dirac operator in Euclidean space \\
$ \eta_{\mu\nu}$ & Minkowski metric \\
$ {\sigma}^j$ & Pauli spin matrices \\
$ {\gamma}^j $ & gamma matrices \\
$(\Omega,\mathcal{F},\mathbb{P})$ & Probability space \\
$ \Omega $ & Sample space \\
$ \mathcal{F} $& sigma algebra on $ \Omega $ \\
	$  (E, \mathcal{B})$ & complete separable metric spcae \\
$ E $ & Complete sample space \\
$ \mathcal{B} $ & sigma algebra on $E$ \\
$D_k$ & subset of  $D\subset\mathbb{R}^d$.\\
	$\mathcal{D}$ & Berezin integral-measure\\

\hline
\end{tabular}
\end{center}
\end{table}

\newpage
\section{Probabilistic Framework: Kolmogorov, Markov Semigroups and Feynman--Kac}
\label{sec:prob-framework}
We first recall the basic probabilistic structures that underlie stochastic
representations of PDEs, in order to formulate precisely what it would mean
for the Dirac equation to admit a \emph{probability measure} on path space.
\subsection{The connection between Semigroup Theory and Probability Theory}
\begin{definition}
	Let $ L $ be a Banach space. For each $ t \ge 0 $ let $ \tilde{T}(t): L \rightarrow L $ be a bounded linear operator
	(i.e.$  \parallel \tilde{T}(t)f \parallel \le M(t)\parallel f \parallel $ for each $ f $ in $ L $, and some finite constant $ M(t) ) $  then, \\
	the family $ \{\tilde{T}(t), t \ge 0 \} $ is a strongly continuous semigroup on $ L $ if the following conditions hold. \\
	(i) $ \tilde{T}(t+s)=\tilde{T}(t)\tilde{T}(s) $ for every $ s,t \ge 0 $ \\
	(ii) $\tilde{T}(0)=I $(identity operator) \\
	(iii) the mapping $ t \rightarrow \tilde{T}(t)f $ is continuous on $ [0, \infty ) $ for each  $ f $ in $ L $. \\
If in addition one has:\\
	$  \parallel \tilde{T}(t)f \parallel \le \parallel f \parallel $ for each $t \ge 0 $ and $ f $ in $ L $, then the semigroup $ \{\tilde{T}(t), t \ge 0 \} $ is called  
a strongly continuous contraction semigroup. 
\end{definition}
\begin{definition}
	Let $ \{\tilde{T}(t), t \ge 0 \} $ be a strongly continuous semigroup on $ L $. The infinitesimal generator $ A $ of $ \{\tilde{T}(t), t \ge 0 \} $ defined by \\
\[
	Af = \lim_{t\downarrow0}\frac{\tilde{T}(t)f-f}{t}.
\]
\end{definition}
The set of $f$ in $ L $ for which $ Af $ is defined is denoted by $\mathsf{D}(A)$ and called the domain of $ A $.\\
{\bf Fact 1} Let $A$ be the inifinitesimal generator of a strongly continuous contraction semigroup $ \{\tilde{T}(t), t \ge 0 \} $ on the Banach space $ L $ . Then 
\[
	\frac{d\tilde{T}(t)}{dt}=A \tilde{T}(t)f=\tilde{T}(t)Af
\]
Then $\mathsf{D}(A)$ is invariant with respect to each $ \tilde{T}(t) $. \\
{\bf Fact 2 ( Uniqueness Theorem)}
Let $ A $ be the the inifinitesimal generator of a strongly continuous contraction semigroup $ \{\tilde{T}(t), t \ge 0 \} $ on the Banach space $ L $. \\
If $ f \in \mathsf{D}(A) $, then $ u(t)=\tilde{T}(t)f $ is the unique solution of \\
\[
	\frac{du(x,t)}{dt}=Au(x,t)
	u(0)=u(0+)=f
\]
\begin{definition}
	Let $ (E,\mathcal{B}) $ be a complete separable metric space with $ \sigma $- algebra of Borel subsets of $ \mathcal{B} $ of $E$  and $ (\Omega, \mathcal{F}) $
	be a  measurable space. Let $ X(t, .): \Omega \rightarrow E $ be measurable with respect $ (E,\mathcal{B}) $  and $ (\Omega, \mathcal{F}) $. Then the collection 
	$ X(t) $ is called a stochastic process. \\
	{\bf Markov property:} For each $ s,t \ge 0 $, $x \in E $  and $ B \in \mathcal{B} $,
	\[
		P_{x}(X(t+s) \in B | \mathcal{F}_{s})= P_{X(s)}(X(t) \in B) 
\]
\end{definition}
\subsubsection{Markov semigroups and generators}
Let ${X(t)}_{t\ge0}$ be a Markov process on a state space $E$ with transition
probabilities
\[
	P_t(x,A) = \mathbb{P}(X(t)\in A\mid X(0)=x).
\]
For bounded measurable $f$, the Markov semigroup acts as
\[
	(P_t f)(x) = \mathbb{E}_x[f(X(t))] = \int_E f(y)P_t(x,dy),
\]
and satisfies
\[
P_0 = \mathrm{Id},\qquad P_{t+s}=P_tP_s,\qquad P_t 1 = 1,\qquad P_t f\ge0\ \text{if}\ f\ge0.
\]
The (infinitesimal) generator is defined on a suitable core $\mathsf{D}(A)$ by
\[
Af = \lim_{t\downarrow0}\frac{P_tf-f}{t}.
\]
In many classical examples (diffusions, jump processes) $A$ is a second-order
or integro-differential operator.

For PDEs of the form
\[
\partial_t u = Au,
\]
a Markov process with generator $A$ yields a probabilistic representation
\[
u(t,x) = (P_t f)(x) = \mathbb{E}_x[f(X(t))]
\]
for $u(0,\cdot)=f$. The key properties are: positivity, contraction
(e.g.\ on $L^\infty$ or $C_b$), and the semigroup law.

\subsection{Kolmogorov extension theorem}
To understand the Feynman-Kac formalism, a few words about the Wiener measure are in order. Since the existence Wiener measure is the cornerstone of the mathematically rigorous approach toward  path integrals, some highlights outlining its construction  are given below.

By the result due to Kolmogorov, any system of finite dimensional distributions $ {\mu}_{{t_1},........,{t_n}} $ , i.e., the family of probability measures, on $ R^n $ indexed by
$0\le{t_1<......t_n<t}$ and satisfying the following consistency conditions: \\

$K_1. \mu_{t_{\sigma(1)}........,t_{\sigma(n)}}(I_1 \times ........\times{I_n})={\mu}_{t_1,........,t_n}(I_{{\sigma}^{-1}(1)} \times ......\times I_{{\sigma}^{-1}(n)})$ for all permutations $\sigma$ on $\{1,2,......,n\}$.\\

$K_2. {\mu}_{{t_1},........,{t_n}}(I_1 \times ........\times{I_n})={\mu}_{{t_1},........,{t_k},{t_{k+1}}.....{t_{n+m}}}( I_1 \times ........\times{I_n}, \times R \times R.....\times R) $ for all $m \in \mathcal{N} $ where $I_1={[a_i, b_i]}, i=1,2,.....,n$ there exists a unique probability measure ${\mu}^{x_0}$ on $ R^{[0,T]}$ that has $\{{\mu}_{{t_1},........,{t_n}}\}$ as their finite dimensional distributions.\\

 More precisely, let $\mathcal{C}_{(0,T)}$ be the classes of cylinder subsets in $R^{[0,T]}$ whose path elements have the form $C_{I_1  ........ I_n}^{{t_1},........,{t_n}}=\{\Phi \in R^{[0,T]}|\Phi(t_1) \in I_1........\Phi(t_n) \in I_n ; 0 \le t_1<t_2........t_n \le T \}$(Figure 1).
\begin{figure}[h!]
\includegraphics[width=6in,angle=-0]{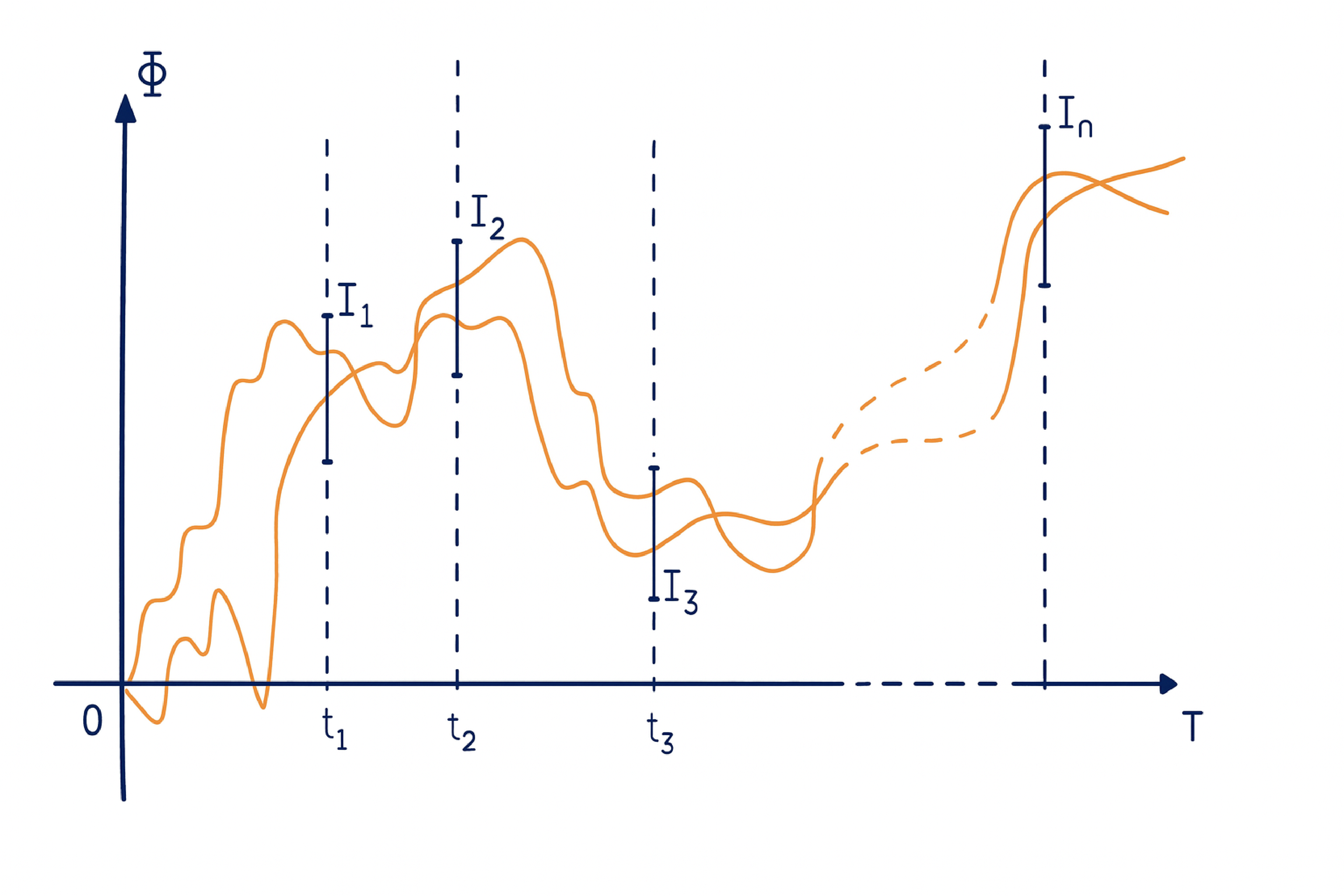}
\caption{The cylindrical subsets of $\Phi(t)$}
\end{figure}

 Then ${\mu}^{x_0}$ defined below for every $0 \le t_1<t_2........t_n \le T$, $I_1  ........ I_n \subset R $, ${\mu}^{x_0}(C_{I_1  ........ I_n}^{{t_1},........,{t_n}})=\int_{I_1} dx_1.............\int_{I_n} dx_n k(t/n;x_0,x_1)........k(t/n;x_{n-1},x_n) $ satisfies Kolmogorov's consistency conditions $(K_1),(K_2)$ and it can be extended from $\mathcal{C}_{(0,T)}$ to $\mathcal{B}_{(0,T)}$=$\sigma$-algebra of Borel subset in $ R^{[0,T]} $ by Kolmogorov's extension theorem.\\
 This extension of ${\mu}^{x_0}$ to $R^{[0,T]}$ is proven to be entirely supported by the space of continuous functions $\mathcal{C}_{(0,T)}$ and is called a Wiener measure ${\mu}_{w}^{x_0}$. As a result one dimensional Brownian motion started at $x_0$, $ \{ X(t), X_0=x_0,0 \le T$ can be identified with the following probability space
$({C}_{(0,T)}^{x_0},\mathcal{B}_{(0,T)}^{x_0},{\mu}_{w}^{x_0})$ where $C_{(0,T)}^{x_0}= \{ \Phi(t) \in C_{(0,T)}|\Phi(0)=x_0,0 \le t \le T \}$ and $P(X(t) \in I_1, X(t_1) \in I_2......X(t_n) \in I_n|X(0)=x_0)={\mu}_{w}^{x_0}C_{I_1  ........ I_n}^{{t_1},........,{t_n}}$.
Here $C_{(0,T)}^{x_0}$ is the space of all possible trajectories of Brownian motion originating at $x_0$, and Wiener measure ${\mu}_{w}^{x_0}$ prescribes probabilities to various set of trajectories.\\
Typical sets include $\{ \Phi \in C_{(0,T)}|f(t)<\Phi(t)<g(t);t \in [0,T]\}$ for any given functions $f(t), g(t) \in C_{(0,T)}(Figure 2)$.
\begin{figure}[h!]
\includegraphics[width=6in,angle=-0]{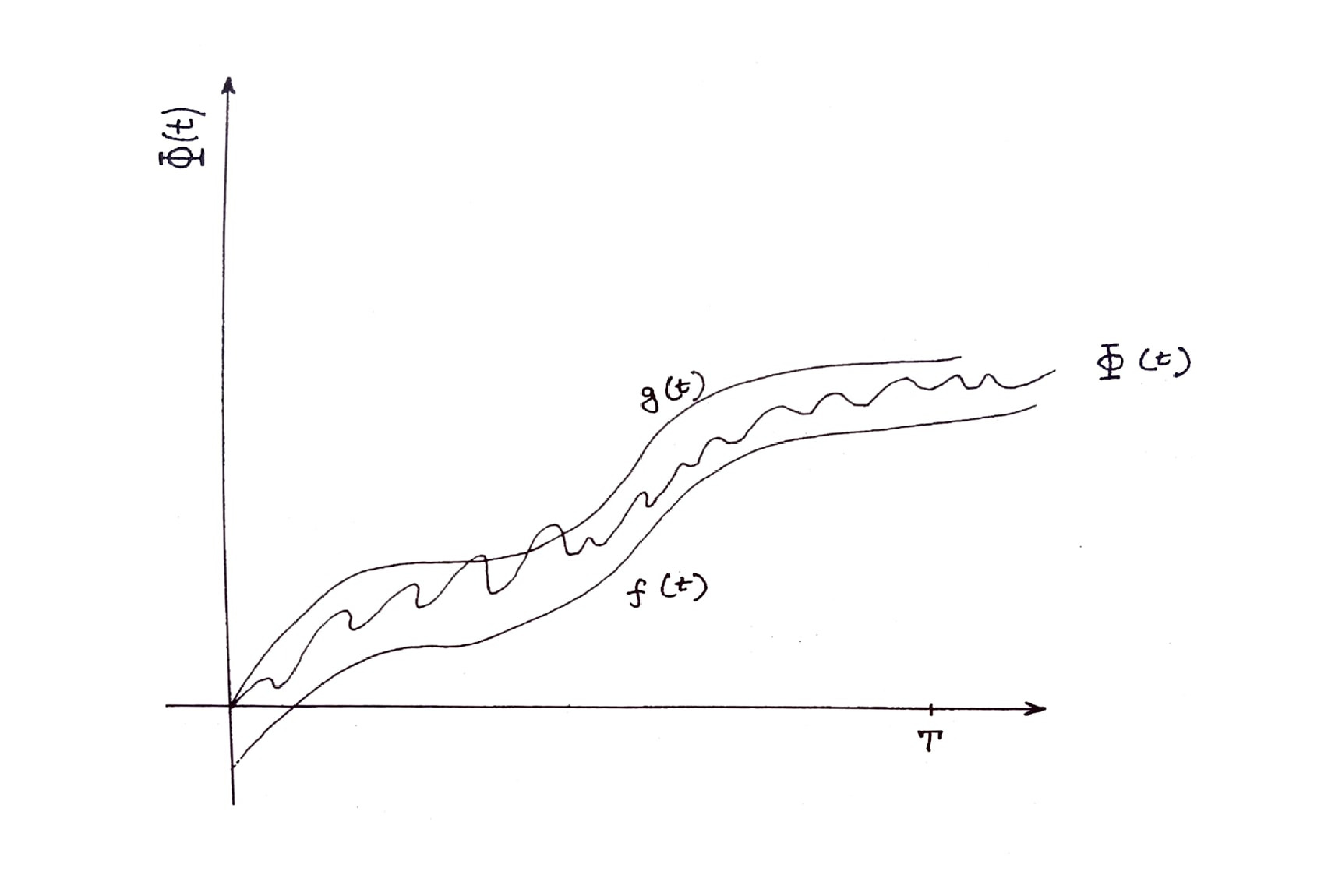}
\caption{A plot for the Brownian trajectories}
\end{figure}
Going back to the real valued Trotter product formula,\\
 $\tilde{\mu}_n^{x_0}=Proj_{R^{t/n} \times R^{{2t}/n} \times .......R^{{n-1}/n} \times R^t} {\mu}_{w}^{x_0}$
converges to $ Proj_{R^{0,T}} {\mu}_{w}^{x_0}={\mu}_{w}^{x_0}$ as $n \rightarrow \infty $ and thus one captures the Wiener measure in the limit. The construction of Brownian motion and Wiener measure extend readily to $R^n$.\\
To construct a probability measure on path space, one typically specifies
finite-dimensional distributions
\[
\mathbb{P}(X(t_1)\in I_1,\dots,X(t_n)\in I_n)
\]
or, in density form,
\[
p_{t_1,\dots,t_n}(x_1,\dots,x_n),
\]
and requires:
\begin{itemize}
  \item nonnegativity and normalization,
  \item consistency under marginalization in time,
  \item appropriate measurability and regularity.
\end{itemize}
Kolmogorov's extension theorem \cite{Kolmogorov1933,BogachevMeasure,RudinFA,YosidaFA} then
guarantees the existence of a probability measure on path space whose
finite-dimensional marginals are the given $p_{t_1,\dots,t_n}$.

From this perspective, asking for a Dirac ``path measure'' means asking for
a family of nonnegative densities $p_{t_1,\dots,t_n}$ whose marginals produce
the Dirac propagator and satisfy the Kolmogorov consistency conditions.

\subsection{Feynman--Kac for parabolic equations}
{ Extension of the fractional heat equation in the interacting system by the Feynman-Kac formula
\cite{Datta1996,Datta2000,Datta2023,Datta2024,Datta2025,Datta2022},\cite{Korzeniowski1992}:}
\begin{proposition}
	Let the $Schr\ddot{o}dinger$ semigroup[50] $\tilde{T}(t)=e^{tH_0}$ be defined by $\tilde{T}(t)f(x)=E_xf(X(t))$ for f in $L$ where $L$ is the Banach space of bounded real valued functions with generator of the semigroup $ A=H_0=C_{2}{\Delta}$. Then
	$\tilde{\tilde{T}}(t)f(x)=E_x {e^{-\int_{0}^{t}V(X(s))ds}f(X(t)} $ defines a semigroup with infinitesimal generator $\tilde A$ such that
	$\tilde{A}f=Af(x)+V(x)f(x) $ and $\mathsf{D}(\tilde A)=\mathsf{D}(A)$. Hence $ \psi(t,\vec{x})=\tilde{\tilde{T}}(t)f(x) $ solves $\frac{\partial \psi(t,\vec{x})}{\partial t}=C_{2}\frac{{\partial}^2\psi(t,\vec{x})}{\partial x^2}
+V\psi(t,\vec{x})$.\\
\end{proposition}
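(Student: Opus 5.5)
The plan is to treat the killed (Feynman--Kac) semigroup as a bounded multiplicative perturbation of the Brownian semigroup $\tilde{T}(t)$ and read off its generator directly from the definition. Throughout I take $V$ bounded and uniformly continuous. I work on the closed subspace of $L$ consisting of bounded uniformly continuous functions, on which $\tilde{T}(t)$ is a strongly continuous contraction semigroup; this is needed so that the definitions and Facts 1--2 apply. Write $Z_t=\exp\bigl(-\int_0^t V(X(s))\,ds\bigr)$, so that $\tilde{\tilde{T}}(t)f(x)=E_x[Z_t f(X(t))]$. There is also a sign issue: with the weight $e^{-\int V}$ the generator comes out as $Af-Vf$. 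The formula $\tilde{A}f=Af+Vf$ as written corresponds to the weight $e^{+\int V}$, i.e.\ to replacing $V$ by $-V$. I would state this convention explicitly and otherwise run the same argument.

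\textbf{Step 1 (semigroup law).} Boundedness is immediate from $\|\tilde{\tilde{T}}(t)f\|\le e^{t\|V\|_\infty}\|f\|$. For the semigroup law, use the multiplicative functional identity $Z_{t+s}=Z_s\,(Z_t\circ\theta_s)$, where $\theta_s$ is the path shift. Conditioning on $\mathcal{F}_s$ and applying the Markov property from the definition above gives
\[
E_x[Z_{t+s}f(X(t+s))]=E_x\bigl[Z_s\,E_{X(s)}[Z_t f(X(t))]\bigr]=\tilde{\tilde{T}}(s)\tilde{\tilde{T}}(t)f(x).
\]
The identity $\tilde{\tilde{T}}(0)=I$ is trivial.

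\textbf{Step 2 (strong continuity and generator).} Split the difference quotient as
\[
\frac{\tilde{\tilde{T}}(t)f-f}{t}=\frac{\tilde{T}(t)f-f}{t}+E_x\Bigl[\frac{Z_t-1}{t}\,f(X(t))\Bigr].
\]
For $f\in\mathsf{D}(A)$ the first term tends to $Af$ by definition. For the second term, use $|Z_t-1-(-\int_0^tV(X(s))ds)|\le C t^2$ to reduce it to $-E_x\bigl[\tfrac1t\int_0^t V(X(s))\,ds\,f(X(t))\bigr]$. This converges to $-V(x)f(x)$ uniformly in $x$ because $V$ and $f$ are uniformly continuous and $\sup_x P_x(\sup_{s\le t}|X(s)-x|>\delta)\to0$ as $t\downarrow0$. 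Strong continuity follows from the same estimate applied to $\tilde{\tilde{T}}(t)f-f$.

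\textbf{Step 3 (domain and conclusion).} The map $B:f\mapsto \frac{1}{t}E_x[(Z_t-1)f(X(t))]$ converges for every $f$ in the space, not only on $\mathsf{D}(A)$; in the limit it is the bounded operator $-V$. Hence the limit of the quotient for $\tilde{\tilde{T}}$ exists iff that for $\tilde{T}$ exists, which gives $\mathsf{D}(\tilde{A})=\mathsf{D}(A)$ and $\tilde{A}=A-V$ (or $A+V$ in the paper's sign convention). Then Fact 1 yields $\partial_t\psi=\tilde{A}\psi$ for $\psi=\tilde{\tilde{T}}(t)f$ with $A=C_2\,\partial_x^2$. Fact 2 gives uniqueness; it applies after rescaling by $e^{-t\|V\|_\infty}$ to obtain a contraction semigroup.

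The main obstacle is Step 2: obtaining convergence of the perturbation term \emph{uniformly in $x$}. This is where the regularity assumptions on $V$ and the restriction to uniformly continuous $f$ are essential. For merely bounded measurable $V$ one only gets pointwise convergence, and the generator statement would have to be replaced by a weak or resolvent-based version.
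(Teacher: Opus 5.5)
The paper states this proposition without proof and attributes it to the cited works, so there is no argument in the text to compare yours with. Your proof is correct and is the standard direct one. The identity $Z_{t+s}=Z_s\,(Z_t\circ\theta_s)$ together with the Markov property gives the semigroup law. Splitting the difference quotient into the Brownian quotient plus a term that converges in sup norm to $-Vf$ for \emph{every} $f$ then gives $\mathsf{D}(\tilde A)=\mathsf{D}(A)$ at once.

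The usual alternative is closer to the Trotter product formula the paper mentions in its discussion of Wiener measure. There one takes the semigroup generated by $A-V$ on $\mathsf{D}(A)$ from the bounded perturbation theorem, and identifies it with the Feynman--Kac expectation through $\bigl(\tilde T(t/n)e^{-tV/n}\bigr)^n$. That route works naturally in $L^2$ with merely bounded measurable $V$, at the price of the identification step. Yours needs uniform continuity but avoids the identification.

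Both of your repairs to the statement are genuinely needed.
\begin{itemize}
\item \textbf{Sign.} With the printed weight $e^{-\int_0^t V(X(s))\,ds}$ the generator is $A-V$. The paper itself confirms this a few lines later, where it writes $\partial_t u=\tfrac12\Delta u-Vu$ with the same weight. The printed $\tilde Af=Af+Vf$ belongs to the weight $e^{+\int_0^t V}$.
\item \textbf{Space.} On all bounded functions the Brownian semigroup is not strongly continuous: for $f=\mathbf{1}_{[0,\infty)}$ one has $\|\tilde T(t)f-f\|_\infty=\tfrac12$ for every $t>0$. Once you pass to bounded uniformly continuous functions, uniform continuity of $V$ is forced as well, since $1\in\mathsf{D}(A)$ and $\tilde A1=-V$ must lie in the space.
\end{itemize}

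Three small points to tighten:
\begin{itemize}
\item Check that $\tilde{\tilde T}(t)$ maps this space into itself. This is immediate from $X(s)=x+\sqrt{2C_2}\,B_s$ under $P_x$ and the uniform continuity of $V$ and $f$.
\item Step 1 uses the Markov property for path functionals, $E_x[F\circ\theta_s\mid\mathcal{F}_s]=E_{X(s)}[F]$. This is stronger than the single-time version in the paper's definition. For Brownian motion it follows from independent increments, but you should invoke it in that form.
\item State that the PDE conclusion, like Fact 2, is for $f\in\mathsf{D}(A)$.
\end{itemize}
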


A paradigmatic example is the heat equation (or Schr\"odinger equation after
imaginary time), where the generator is the Laplacian plus a potential:
\[
\partial_t u = \frac{1}{2}\Delta u - V u,\qquad u(0,x)=f(x).
\]
The Feynman--Kac formula states that
\[
u(t,x) = \mathbb{E}_x\left[
  f(B_t)\exp\!\left(-\int_0^t V(B_s)\,ds\right)
\right],
\]
where $(B_t)$ is Brownian motion and $\mathbb{E}_x$ denotes expectation for
$B_0=x$. Here the path measure is Wiener measure and the weight
$e^{-\int_0^t V(B_s)\,ds}$ is positive and integrable.

In this setting, the probabilistic representation rests on:
\begin{enumerate}
  \item parabolicity of the operator;
  \item positivity-preserving Markov semigroup;
  \item existence of a $\sigma$-additive probability measure on path space
        (Wiener measure) with continuous sample paths.
\end{enumerate}

\subsection{Bochner--Minlos and characteristic functionals}

In infinite dimensions, Bochner--Minlos theory characterizes when a functional
$\Gamma$ on a nuclear space $E$ is the characteristic functional of a Borel
probability measure on the dual $E'$:
\[
\Gamma(f) = \int_{E'} e^{\mathrm{i}\langle\phi,f\rangle}\,d\mu(\phi).
\]
A necessary and sufficient condition is that $\Gamma$ be continuous at $0$ and
positive-definite. For Gaussian measures, $\Gamma$ is of the form
$e^{-\frac12 Q(f,f)}$ with $Q$ a positive-definite quadratic form.

For Dirac functionals of the form $e^{\mathrm{i}S(f)}$ with non-quadratic or
indefinite $S$, positive-definiteness fails, and Minlos' theorem immediately
rules out the existence of a corresponding probability measure. This is one
manifestation of the measure-theoretic obstruction for fermionic and
Lorentzian path integrals.

\medskip

In the following sections we apply this framework to Dirac, Klein--Gordon
and telegrapher equations, and recast Zastawniak's nonexistence result for
Dirac measures in these probabilistic terms.

\section{Minkowski Signature, Hyperbolicity and Positivity Failure}
\label{sec:minkowski}

\subsection{Hyperbolicity via conic classification}

A second-order PDE in two variables
\[
A\,\partial_t^2 u + 2B\,\partial_t\partial_x u + C\,\partial_x^2 u + \cdots = 0
\]
is hyperbolic if $B^2-AC>0$. This is the same discriminant as for conic
sections. We briefly apply this to telegrapher and Dirac equations in $1+1$
dimensions.

\subsubsection{Telegrapher equation}

The telegrapher equation
\[
\partial_t^2 u + 2\lambda\,\partial_t u = c^2\,\partial_x^2 u
\]
has principal part $\partial_t^2 u - c^2\partial_x^2 u$, corresponding to
$A=1$, $B=0$, $C=-c^2$, so $B^2-AC=c^2>0$. Thus it is hyperbolic.

\subsubsection{Dirac equation via its second-order reduction}

The $(1+1)$-dimensional Dirac equation
\[
\mathrm{i}\partial_t\psi = \bigl(-\mathrm{i}\sigma_3\partial_x + m\sigma_1\bigr)\psi
\]
squares to
\[
\partial_t^2\psi = \partial_x^2\psi - m^2\psi,
\]
with principal part $\partial_t^2\psi-\partial_x^2\psi$ and discriminant
$B^2-AC=1>0$. Thus it is also hyperbolic. This hyperbolicity arises from the
Minkowski metric
\[
\eta_{\mu\nu}=\mathrm{diag}(1,-1,-1,-1),
\]
which yields an indefinite quadratic form in $(\partial_t,\nabla)$.

\subsection{Minkowski vs Euclidean Dirac operators}

In $(3+1)$ dimensions the Minkowski Dirac equation reads
\[
(\mathrm{i}\gamma^\mu\partial_\mu - m)\psi = 0,
\]
with Clifford algebra\cite{Clifford1882} $\{\gamma^\mu,\gamma^\nu\}=2\eta^{\mu\nu}I$.
Multiplying by $(\mathrm{i}\gamma^\nu\partial_\nu + m)$ yields the
Klein--Gordon equation\cite{KleinGordonOriginal},\cite{GordonOriginal} for each component:
\[
(\Box + m^2)\psi = 0,\qquad \Box = \partial_t^2 - \nabla^2.
\]

After Wick rotation $t=-\mathrm{i}\tau$, with Euclidean gamma matrices
$\gamma_E^\mu$ , the Euclidean Dirac operator is
\[
\mathbb{D}_E=\gamma_E^\mu\partial_\mu + m,
\]
and
\[
	(\mathbb{D}_E+m)(-\mathbb{D}_E+m) = -\partial_\mu\partial_\mu + m^2
= -(\partial_\tau^2+\nabla^2)+m^2.
\]
Thus the \emph{square} of $\mathbb{D}_E$ is elliptic and positive, but $D_E$ itself is
first-order, matrix-valued and not positive.

\subsection{Complex structure of $\mathbb{D}_E$ and non-positivity}

In the Dirac basis one has
\[
\gamma_E^0=\gamma^0
= \begin{pmatrix}I_2&0\\0&-I_2\end{pmatrix},\qquad
\gamma_E^j = \mathrm{i}\gamma^j =
\begin{pmatrix}0&\mathrm{i}\sigma^j\\-\mathrm{i}\sigma^j&0\end{pmatrix}.
\]
Hence
\[
\mathbb{D}_E = \begin{pmatrix}
m+\partial_\tau & \mathrm{i}\,\boldsymbol{\sigma}\!\cdot\!\nabla \\
-\mathrm{i}\,\boldsymbol{\sigma}\!\cdot\!\nabla & m+\partial_\tau
\end{pmatrix}
\]
is complex and matrix-valued. The Euclidean action
\[
	S_E[\bar\psi,\psi] = \int d^4x\,\bar\psi \mathbb{D}_E\psi
\]
is not bounded below and $e^{-S_E}$ is not positive. Integrating out
fermions yields
\[
	Z_{\text{ferm}}=\det(\mathbb{D}_E),
\]
and $\det(\mathbb{D}_E)$ typically has a nontrivial complex phase, for example encoded
in an $\eta$-invariant \cite{AtiyahPatodiSinger1975}. Thus $e^{-S_E}$ cannot be
interpreted as a probability density.

\subsection{Oscillatory integrals and Minlos obstruction}
Oscillatory integrals of the form
\[
  \mu(D) = \int_D e^{i\Phi(x)}\,dx,
\]
where $\Phi:\mathbb{R}^d \to \mathbb{R}$ is smooth and $|e^{i\Phi(x)}|=1$, appear frequently in physics (e.g., Feynman path integrals) and analysis.
In this  section we outline several analytic arguments showing that such functionals cannot, in general, define a
$\sigma$-additive complex measure.

\subsection{Complex measures and total variation}

Let $(\Omega,\mathcal{F})$ be a measurable space.  A \emph{complex measure} is a
countably additive set function $\mu:\mathcal{F}\to\mathbb{C}$.
\begin{definition}
Its \emph{total variation} $|\mu|$ is represented by
\[
|\mu|(D)
  = \sup_{\{D_k\}} \sum_k |\mu(D_k)|,
\]
\end{definition}
where the supremum is taken over all finite partitions $\{D_k\}$ of $D\subset\mathbb{R}^d$.

\begin{definition}
A complex measure $\mu$ is called \emph{finite} if $|\mu|(\Omega)<\infty$.
\end{definition}

This finiteness condition is essential for $\mu$ to fit into the usual
measure-theoretic formalism.  Our aim is to show that oscillatory densities
cannot satisfy it.
\subsection{Total Variation Blow-Up}
\begin{proposition}
The complex measure $\mu(D)=\int_D e^{i\Phi(x)}dx$ is finite if and only if
the Lebesgue measure of $D$ is finite.  In particular, $\mu$ is \emph{not}
a finite complex measure on $\mathbb{R}^n$.
\end{proposition}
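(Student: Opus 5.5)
The key identity is that the density $e^{i\Phi}$ has modulus one everywhere, so the total variation of $\mu$ on any set $D$ should equal its Lebesgue measure. The plan is to prove the two inequalities $|\mu|(D)\le \lambda(D)$ and $|\mu|(D)\ge\lambda(D)$, where $\lambda$ is Lebesgue measure on $\mathbb{R}^d$. The statement then follows at once: $\mu$ restricted to $D$ is finite exactly when $\lambda(D)<\infty$, and $|\mu|(\mathbb{R}^d)=\lambda(\mathbb{R}^d)=\infty$.

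\textbf{Upper bound.} Take any finite partition $\{D_k\}$ of $D$ into measurable pieces. By the triangle inequality for integrals and $|e^{i\Phi}|=1$,
\[
\sum_k |\mu(D_k)| \le \sum_k \int_{D_k} |e^{i\Phi(x)}|\,dx = \lambda(D).
\]
Taking the supremum gives $|\mu|(D)\le\lambda(D)$. When $\lambda(D)<\infty$ this also shows that $\mu$ is a well-defined, countably additive complex measure on the measurable subsets of $D$, by dominated convergence.

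\textbf{Lower bound.} This is the step needing care, since a single set $D$ can have $|\mu(D)|$ much smaller than $\lambda(D)$ because of cancellation. To beat cancellation, partition according to the phase. Fix $N\ge1$ and set
\[
D_k=\{x\in D:\ \Phi(x) \bmod 2\pi \in [2\pi k/N,\,2\pi(k+1)/N)\},\qquad k=0,\dots,N-1.
\]
These are measurable because $\Phi$ is continuous. On $D_k$ the phase lies in an arc of length $2\pi/N$ around $\theta_k=2\pi(k+\tfrac12)/N$, so $\operatorname{Re}\bigl(e^{-i\theta_k}e^{i\Phi}\bigr)\ge\cos(\pi/N)$. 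Hence
\[
|\mu(D_k)|\ge\operatorname{Re}\int_{D_k}e^{-i\theta_k}e^{i\Phi}\,dx\ge\cos(\pi/N)\,\lambda(D_k),
\]
and summing over $k$ gives $|\mu|(D)\ge\cos(\pi/N)\lambda(D)$. Letting $N\to\infty$ yields $|\mu|(D)\ge\lambda(D)$.

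\textbf{Infinite measure.} When $\lambda(D)=\infty$, the integral $\int_D e^{i\Phi}$ need not even exist as a Lebesgue integral. In that case I apply the argument above to the truncations $D\cap B_R$, where $B_R$ is the ball of radius $R$. Any candidate complex measure extending $\mu$ would satisfy
\[
|\mu|(D)\ge|\mu|(D\cap B_R)=\lambda(D\cap B_R)\to\infty \quad (R\to\infty),
\]
contradicting the finiteness requirement $|\mu|(\Omega)<\infty$ in the paper's definition. With $D=\mathbb{R}^d$ this gives the ``in particular'' clause.

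The main obstacle is the lower bound: one has to produce partitions that defeat the oscillatory cancellation, and the phase-sector partition above is the device I would use.
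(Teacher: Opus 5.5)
Your proof is correct and follows the paper's route: the upper bound $|\mu|(D)\le\lambda(D)$ comes from $|e^{i\Phi}|=1$, and the lower bound comes from partitions adapted to the level sets of $\Phi$, so that the phase barely varies on each piece. Your phase-sector partition with the $\cos(\pi/N)$ estimate, and your truncation to $D\cap B_R$ when $\lambda(D)=\infty$, make rigorous what the paper only sketches: its ``$\approx$'' step, and its passage to unbounded domains, where the upper bound $\le\lambda(\mathbb{R}^d)=\infty$ used in the paper's first paragraph would by itself prove nothing.
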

\begin{proof}
Suppose $\mu$ were a finite complex measure absolutely continuous with respect to Lebesgue measure $\lambda$,
with density $e^{i\Phi(x)}$. Then its total variation would satisfy
\[
  \|\mu\|(\mathbb{R}^d)
  = \sup_{\{D_k\}}\sum_k |\mu(D_k)|
  \le \int_{\mathbb{R}^d} |e^{i\Phi(x)}|\,dx
  = \lambda(\mathbb{R}^d).
\]
Since $\lambda(\mathbb{R}^d)=\infty$, $\mu$ cannot be a finite measure.

	Even on a bounded set $B\subset\mathbb{R}^d$, consider partitions $\{D_k\}$ adapted to level sets of $\Phi$ so that 
oscillations are minimal within each $D_k$. Then
\[
   \sum_k |\mu(D_k)| \approx \sum_k |D_k| = |B|.
\]
As the partition is refined, this supremum gives $\|\mu\|(B)=|B|$, and for unbounded domains,
$\|\mu\|(\mathbb{R}^d)=\infty$. Hence a pure-phase density cannot yield a finite $\sigma$-additive measure.
\end{proof}

Thus there is no finite ``oscillatory density'' measure with Radon–Nikodym
density of unit modulus with respect to Lebesgue measure.

\subsection{Conditional convergence and failure of $\sigma$-additivity}

Oscillatory integrals of the form
\[
I=\int_{\mathbb{R}^n} e^{i\Phi(x)}dx
\]
rarely converge absolutely.  Typically, the integral converges (if at all)
only as an improper limit of truncated integrals. Or in other words these improper or oscillatory integrals are 
conitionally convergent 

\begin{example}
The integral
\[
\int_{-\infty}^{\infty} e^{ix^2}\,dx = e^{i\pi/4}\sqrt{\pi}
\]
converges only in the sense of Fresnel limits
\[
\lim_{R\to\infty}\int_{-R}^{R} e^{ix^2}\,dx.
\]
Changing the truncation method (for instance, rotating the contour) changes
the value, showing conditional convergence and rearrangement sensitivity.
\end{example}

Rearrangements of conditionally convergent series violate countable
additivity; analogously:

\begin{proposition}
	Let $\mu_R (D)=\int_{D\cap B_R} e^{i\Phi(x)}dx$ where $B_R=\{x:|x|\le R\}$.
If $\mu(D)=\lim_{R\to\infty} \mu_R(D)$ converges only conditionally, then
$\mu$ need not be $\sigma$-additive.
\end{proposition}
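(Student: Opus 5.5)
Since the statement only asserts that $\mu$ \emph{need not} be $\sigma$-additive, the plan is to exhibit one explicit phase $\Phi$ and one explicit countable disjoint family $\{D_k\}$ for which the improper limit $\mu(D)=\lim_{R\to\infty}\mu_R(D)$ exists on every $D_k$ and on their union $D=\bigsqcup_k D_k$, and yet
\[
\mu(D)\neq\sum_k \mu(D_k),
\]
or the right-hand side fails to converge, or its value depends on the order of summation. The model is the rearrangement phenomenon for conditionally convergent series, lifted to sets. I will work in $n=1$. This covers the general case, because one can take $\Phi$ to depend on $x_1$ only and use slabs, or take $\Phi$ radial and use shells. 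The Fresnel phase $\Phi(x)=x^2$ from the preceding example is the natural choice.

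First I choose the pieces. Let $x_k=\sqrt{\pi k}$, so that on $I_k=[x_k,x_{k+1})$ the real part $\cos(x^2)$ has the fixed sign $(-1)^k$. Each $\mu(I_k)$ is an ordinary absolutely convergent integral. The substitution $u=x^2$ gives
\[
\operatorname{Re}\mu(I_k)=\int_{\pi k}^{\pi(k+1)}\frac{\cos u}{2\sqrt u}\,du=(-1)^k a_k,\qquad a_k\sim \frac{1}{\sqrt{\pi k}},
\]
with $a_k$ decreasing. The series $\sum_k \operatorname{Re}\mu(I_k)$ is therefore alternating and conditionally convergent, with $\sum_k |\mu(I_k)|=\infty$. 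That divergence is already the total-variation blow-up of the earlier Total Variation Blow-Up proposition, now restricted to a concrete partition.

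Next I separate the positive and negative pieces. Put $D_+=\bigcup_{k\ \text{even}} I_k$ and $D_-=\bigcup_{k\ \text{odd}} I_k$, each a countable disjoint union. On $D_+$ the truncations satisfy $\operatorname{Re}\mu_R(D_+)\ge \sum_{k\ \text{even},\,x_{k+1}\le R} a_k\to\infty$, and $\operatorname{Re}\mu_R(D_-)\to-\infty$ by the same estimate. So $\mu$ is undefined on $D_\pm$, even though $\mu([0,\infty))=D_+\sqcup D_-$ is the finite number $\tfrac12 e^{i\pi/4}\sqrt\pi$. The domain of $\mu$ is therefore not closed under countable unions, and $\mu$ is not a measure on any $\sigma$-algebra containing the intervals.

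To get a violation that involves only sets on which $\mu$ is finite, I apply Riemann's rearrangement theorem to the terms $(-1)^k a_k$. It gives a permutation $\pi$ for which $\sum_j \operatorname{Re}\mu(I_{\pi(j)})$ converges to any prescribed real number. Finally I use that finite additivity holds on bounded sets, so $\mu_R(\,\cdot\,)$ is an honest finite complex measure for each fixed $R$. This shows that the limit value $\mu([0,\infty))$ equals the sum in the natural order $\pi=\mathrm{id}$. Any other order gives a different value, so countable additivity, which requires order independence, fails.

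The main obstacle is pinning down the sense of ``$\sigma$-additive'' cleanly. If $\mu$ is only defined where the limit exists, I must show that this collection of sets is not a $\sigma$-algebra, as in the $D_\pm$ step, or that additivity fails on it. I expect to lean on the $D_\pm$ divergence as the decisive argument and to use the rearrangement to reinforce it. The asymptotics $a_k\sim (\pi k)^{-1/2}$ are routine via the second mean value theorem.
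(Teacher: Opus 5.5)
Your route is the paper's argument made concrete. The paper only posits a partition $\{D_k\}$ whose values $\mu(D_k)$ form a conditionally convergent series and then reorders; you fix $\Phi(x)=x^2$ and explicit intervals. The bookkeeping around that is sound: finite additivity of $\mu_{x_N}$ shows the natural-order sum equals $\mu([0,\infty))$, and the $D_\pm$ observation is a good addition.

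\textbf{The key computation is wrong as written.} For $x\in I_k$ one has $x^2\in[\pi k,\pi(k+1))$. On that interval it is $\sin u$, not $\cos u$, that keeps the sign $(-1)^k$; $\cos u$ changes sign at $u=\pi k+\pi/2$. Integrating by parts, the boundary term $\sin u/(2\sqrt u)$ vanishes at both endpoints, so
\[
\mathrm{Re}\,\mu(I_k)=\int_{\pi k}^{\pi(k+1)}\frac{\cos u}{2\sqrt u}\,du=\frac14\int_{\pi k}^{\pi(k+1)}\frac{\sin u}{u^{3/2}}\,du=O\bigl(k^{-3/2}\bigr).
\]
Hence the real parts form an \emph{absolutely} convergent series. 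Every rearrangement of them has the same sum, so Riemann's theorem gives you nothing. Likewise $\mathrm{Re}\,\mu_R(D_\pm)$ stays bounded instead of tending to $\pm\infty$. Both steps you call decisive therefore fail in the form you state them. The asymptotics $a_k\sim(\pi k)^{-1/2}$ you quote belong to the imaginary parts, $\mathrm{Im}\,\mu(I_k)=\int_{\pi k}^{\pi(k+1)}\frac{\sin u}{2\sqrt u}\,du=(-1)^k a_k$.

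\textbf{The repair.} Replace $\mathrm{Re}$ by $\mathrm{Im}$ throughout, or move the cut points to $\sqrt{\pi(k+\frac12)}$ if you want to keep $\cos$. Then:
\begin{itemize}
\item $\mathrm{Im}\,\mu_R(D_+)\ge\sum_{k\ \mathrm{even},\,x_{k+1}\le R}a_k\to+\infty$, since a partially covered last interval only adds a nonnegative term. So $\mu(D_\pm)$ is undefined.
\item A Riemann rearrangement $\sigma$ of the imaginary parts leaves the absolutely summable real parts untouched. This gives $\sum_j\mu(I_{\sigma(j)})=\mathrm{Re}\,\mu([0,\infty))+ic$ for any prescribed real $c$. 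That differs from $\mu([0,\infty))=\frac12 e^{i\pi/4}\sqrt\pi$ whenever $c\neq\sqrt{\pi/8}$.
\end{itemize}

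\textbf{Two smaller points.}
\begin{itemize}
\item Say that \emph{some} reordering changes the value, not every other one; finite permutations do not.
\item Drop or qualify the higher-dimensional remark, which the ``need not'' statement does not require. Under ball truncation an unbounded slab $I_k\times\mathbb{R}^{n-1}$ has $\mu_R$ growing like $R^{n-1}$. With $\Phi=|x|^2$ and $n\ge2$ the shell contributions do not even tend to zero. You would need bounded cross-sections or a different radial profile.
\end{itemize}
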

\begin{proof}
Construct a disjoint partition $\{D_k\}$ of $\mathbb{R}$ so that
$\mu(D_k)$ reproduces the terms of a conditionally convergent series $\sum a_k$.
For a $\sigma$-additive measure, we must have
\[
   \mu\!\left(\bigcup_k D_k\right) = \sum_k \mu(D_k),
\]
where the right-hand side converges \emph{unconditionally} (independently of the order).
Since conditional series are not unconditionally summable, one can rearrange the $D_k$'s (equivalently, reorder the sum)
to obtain different limits.
\end{proof}
Thus $\mu$ fails $\sigma$-additivity
as conditional convergence is fundamentally incompatible with the
definition of a measure.

\subsection{Regularized oscillatory measures}

One often introduces regularizations
\[
\mu_\varepsilon(D)
 = \int_D e^{i\Phi(x)}\chi_\varepsilon(x)\,dx,
\]
where $\chi_\varepsilon$ is smooth with compact support, typically
$\chi_\varepsilon(x)=\chi(\varepsilon x)$ for a bump function $\chi$.

For each $\varepsilon>0$, $\mu_\varepsilon$ is a finite complex measure.  But:

\begin{proposition}
The total variation satisfies
\[
|\mu_\varepsilon|(\Omega)
	= \int_{\Omega} |\chi_\varepsilon(x)|\,dx
  = \varepsilon^{-n}\int_{\mathbb{R}^n}|\chi(u)|\,du.
\]
Thus $|\mu_\varepsilon|(\Omega)\to\infty$ as $\varepsilon\downarrow 0$.
\end{proposition}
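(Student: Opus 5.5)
The plan rests on the standard fact that a complex measure with an $L^1$ density has total variation equal to the $L^1$ norm of that density. Concretely, if $\nu(D)=\int_D g\,d\lambda$ with $g\in L^1(\lambda)$, then
\[
|\nu|(D)=\int_D |g|\,d\lambda .
\]
I would apply this with $g=e^{i\Phi}\chi_\varepsilon$. Since $|e^{i\Phi(x)}|=1$ pointwise, we get $|g|=|\chi_\varepsilon|$, which gives the first equality (taking $\Omega=\mathbb{R}^n$, or any set containing the support of $\chi_\varepsilon$).

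The second equality is then the change of variables $u=\varepsilon x$, with $dx=\varepsilon^{-n}du$, applied to $\int|\chi(\varepsilon x)|\,dx$. The divergence as $\varepsilon\downarrow0$ is immediate provided $\int|\chi|>0$, that is, provided $\chi$ is not identically zero.

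The fact $|\nu|=\int|g|\,d\lambda$ needs justification, and I would prove it in two inequalities.

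\textbf{Upper bound.} For any finite partition $\{D_k\}$,
\[
\sum_k|\nu(D_k)|\le\sum_k\int_{D_k}|g|=\int_D|g|,
\]
exactly as in the proof of the total variation blow-up proposition above.

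\textbf{Lower bound.} This is the main step. I would write $g=|g|\,e^{i\theta}$ with $\theta$ measurable. Partition the circle into $N$ arcs $A_j$ of length $2\pi/N$, and set
\[
D_j=D\cap\{\theta\in A_j\}\cap\{g\neq0\}.
\]
On each $D_j$, the phase $e^{i\theta}$ differs from a fixed unit vector $e^{i\theta_j}$ by at most $2\pi/N$. Hence
\[
|\nu(D_j)|\ge \operatorname{Re}\Bigl(e^{-i\theta_j}\int_{D_j}g\Bigr)\ge \cos(2\pi/N)\int_{D_j}|g|.
\]
Summing over $j$ and letting $N\to\infty$ gives $|\nu|(D)\ge\int_D|g|$. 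This makes rigorous the heuristic in the earlier proof, where partitions adapted to the level sets of $\Phi$ were used. The only subtle point is measurability of $\theta$, which follows from that of $\Phi$ and $\chi$.

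Finally, I would remark on how to read the conclusion. Each $\mu_\varepsilon$ is a genuine finite complex measure, but the family has unbounded total variation. By the uniform boundedness principle, the family therefore cannot converge in the dual of $C_0(\mathbb{R}^n)$, that is, weak-$*$ as measures. So the formal limit $\mu$ is not a complex measure, consistent with the total variation blow-up proposition.
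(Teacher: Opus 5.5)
Your proof is correct and follows the route the paper takes implicitly; the paper states this proposition without proof. That route is to identify $|\mu_\varepsilon|$ with $|e^{i\Phi}\chi_\varepsilon|\,dx=|\chi_\varepsilon|\,dx$ and then rescale $u=\varepsilon x$. Your phase-sector partition is a rigorous version of the ``partitions adapted to level sets of $\Phi$'' heuristic in the paper's earlier total-variation blow-up proof.

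The one point to keep explicit is that the divergence needs $\chi\not\equiv 0$ and $\Omega$ containing the support of every $\chi_\varepsilon$ (e.g.\ $\Omega=\mathbb{R}^n$). On a fixed bounded $\Omega$ one instead gets $|\mu_\varepsilon|(\Omega)\to|\chi(0)|\,\lambda(\Omega)$, which stays bounded.
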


Hence the sequence of finite measures does \emph{not} converge in total
variation norm (nor in any mode compatible with countable additivity).

\subsection{Application of the Vitali--Hahn--Saks theorem}
We recall the theorem:

\begin{theorem}[Vitali--Hahn--Saks\cite{Vitali1907},\cite{Hahn1922},\cite{Saks1937}]
Let $\{\mu_\alpha\}$ be a pointwise bounded family of countably additive
measures on $(\Omega,\mathcal{F})$ converging pointwise on $\mathcal{F}$ to a
set function $\mu$.  Then $\mu$ is countably additive.
\end{theorem}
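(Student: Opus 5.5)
The plan is the classical route: first show that the limit set function $\mu$ is finitely additive, then upgrade this to countable additivity through a uniform (equi-)countable-additivity argument. The upgrade rests on the Baire category theorem applied to a complete metric space of measurable sets. As the first step, I fix a control measure. I set
\[
\lambda=\sum_\alpha 2^{-\alpha}\frac{|\mu_\alpha|}{1+|\mu_\alpha|(\Omega)},
\]
after passing to a sequence $\{\mu_n\}$; the pointwise convergence hypothesis is naturally sequential, so I will restrict to that case. Pointwise boundedness, together with the Nikodym boundedness theorem (or simply finiteness of each complex measure $\mu_n$), makes $\lambda$ a finite positive measure with every $\mu_n\ll\lambda$. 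Finite additivity of $\mu$ is immediate from finite additivity of each $\mu_n$ and passage to the limit.

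In the second step I pass to the Fréchet--Nikodym metric space. I identify sets modulo $\lambda$-null sets and put $d(A,B)=\lambda(A\triangle B)$. This gives a complete metric space, since it is isometric to the closed set of indicator functions in $L^1(\lambda)$. Each map $A\mapsto\mu_n(A)$ is continuous on this space by absolute continuity. For fixed $\varepsilon>0$ I consider the closed sets
\[
F_N=\{A:\ |\mu_n(A)-\mu_m(A)|\le\varepsilon\ \text{for all } m,n\ge N\}.
\]
Pointwise convergence gives $\bigcup_N F_N=\mathcal F$, so Baire category yields some $F_N$ containing a ball $B(A_0,r)$.

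The third step is the standard translation trick, and it is where I expect the main obstacle. Any set $E$ with $\lambda(E)<r$ can be written as $E=(A_0\cup E)\setminus(A_0\setminus E)$, and both $A_0\cup E$ and $A_0\setminus E$ lie in the ball. This gives $|\mu_n(E)-\mu_m(E)|\le 2\varepsilon$ for $m,n\ge N$. Absolute continuity of the finitely many $\mu_1,\dots,\mu_N$ then gives a $\delta>0$ such that $\lambda(E)<\delta$ implies $|\mu_n(E)|\le3\varepsilon$ for all $n$. This is uniform absolute continuity of the family. The delicate point is exactly this step, because the Baire ball is centred at an arbitrary $A_0$ and has to be transported to the empty set. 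Because the $\mu_n$ are complex, I will handle real and imaginary parts separately, or just work with moduli, so that no positivity is used.

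Finally, countable additivity follows. Let $E_k\downarrow\emptyset$. Then $\lambda(E_k)\to0$ since $\lambda$ is a finite measure, so $\lambda(E_k)<\delta$ for $k$ large. Hence $|\mu_n(E_k)|\le3\varepsilon$ uniformly in $n$, and letting $n\to\infty$ gives $|\mu(E_k)|\le 3\varepsilon$. A finitely additive set function that is continuous at $\emptyset$ is countably additive, which is what the theorem asserts. The pointwise boundedness hypothesis also makes $\mu$ finite-valued; in fact finiteness already follows from the convergence itself.

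In the paper this is a cited classical theorem, so invoking it with these references is acceptable. The sketch above is how I would verify it.
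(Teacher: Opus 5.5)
Your proof is correct. The paper gives no proof of this statement; it quotes the theorem with citations to Vitali, Hahn and Saks and then applies it, so there is no competing argument to compare. What you have written is the classical Baire-category proof behind that citation, and each step checks out:
\begin{itemize}
\item the weighted control measure $\lambda$ is finite, because a countably additive complex measure has finite total variation;
\item the Fr\'echet--Nikodym space is complete, via indicator functions in $L^1(\lambda)$;
\item the identity $E=(A_0\cup E)\setminus(A_0\setminus E)$ correctly transports the Baire ball to $\emptyset$ (with $\delta\le r$ understood, and $\sum_n 2^{-n}$ in place of $\sum_\alpha 2^{-\alpha}$);
\item a finitely additive, finite-valued set function that is continuous along $E_k\downarrow\emptyset$ is countably additive.
\end{itemize}

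The one point to sharpen is your restriction to sequences. It is not a convenience but a necessity. As printed, for an arbitrary family $\{\mu_\alpha\}$ converging pointwise (that is, a net), the statement is false. Take $\Omega=\mathbb{N}$ with $\mathcal{F}=2^{\mathbb{N}}$, a free ultrafilter $\mathcal{U}$ directed by reverse inclusion, and put $\mu_U=\delta_{n_U}$ for some chosen $n_U\in U$. This net is bounded by $1$ and converges pointwise to $\mu(A)=1$ for $A\in\mathcal{U}$ and $\mu(A)=0$ otherwise. That limit is finitely additive and vanishes on every singleton, yet $\mu(\mathbb{N})=1$, so it is not countably additive. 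Hence ``passing to a sequence'' cannot mean extracting a sequence from a given family. In this example no sequence has the same pointwise limit, precisely because of the sequential theorem you prove. The statement must therefore be read for sequences, where, as you note, the pointwise boundedness hypothesis is automatic.
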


For our $\mu_\varepsilon$, pointwise boundedness fails:
\[
\sup_{\varepsilon>0} |\mu_\varepsilon|(\Omega) = \infty.
\]
Hence no subsequence can converge to a countably additive limit.  Therefore:

\begin{corollary}
No regularization scheme of the form
$\mu_\varepsilon(A)=\int_A e^{i\Phi(x)}\chi_\varepsilon(x)\,dx$
can converge to a $\sigma$-additive (complex) measure as $\varepsilon\to 0$.
\end{corollary}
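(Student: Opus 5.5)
The plan is to argue by contradiction along a sequence $\varepsilon_k\downarrow 0$, combining the total variation blow-up proposition for $\mu_\varepsilon$ with a uniform-boundedness principle for setwise convergent sequences of measures. The principle is the Vitali--Hahn--Saks theorem quoted above, together with its companion, the Nikodym boundedness theorem. To make the statement meaningful I will fix the interpretation: \emph{convergence} means that $\mu_{\varepsilon_k}(A)\to\mu(A)$ for every Borel set $A\subset\mathbb{R}^n$. Here $\mu$ is a complex measure, so $\sigma$-additive with $|\mu|(\mathbb{R}^n)<\infty$. I will also assume the normalization $\chi(0)\neq 0$, which is the natural one for a cutoff $\chi_\varepsilon(x)=\chi(\varepsilon x)$ that is supposed to tend to $1$.

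\emph{Step 1 (uniform bound from setwise convergence).} If $\mu_{\varepsilon_k}(A)$ converges for every $A$, then $\sup_k|\mu_{\varepsilon_k}(A)|<\infty$ for each fixed $A$. Each $\mu_{\varepsilon_k}$ is a finite complex measure. The Nikodym boundedness theorem then upgrades this setwise bound to $\sup_k|\mu_{\varepsilon_k}|(\mathbb{R}^n)<\infty$, using the elementary estimate $|\nu|(\Omega)\le 4\sup_A|\nu(A)|$ for complex measures.

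\emph{Step 2 (contradiction with blow-up).} The preceding proposition gives
\[
|\mu_{\varepsilon_k}|(\mathbb{R}^n)=\varepsilon_k^{-n}\int_{\mathbb{R}^n}|\chi(u)|\,du\to\infty .
\]
This contradicts Step 1, so setwise convergence on all Borel sets is impossible.

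\emph{Step 3 (alternative, identifying the would-be limit).} As a cross-check, and to cover the case where convergence is only assumed on bounded sets, I will identify the limit directly. For bounded $A$, dominated convergence with $|e^{i\Phi}\chi_\varepsilon|\le\|\chi\|_\infty\mathbf 1_A$ gives
\[
\mu_{\varepsilon_k}(A)\to\chi(0)\int_A e^{i\Phi(x)}\,dx .
\]
Any $\sigma$-additive extension $\mu$ therefore has density $\chi(0)e^{i\Phi}$ on every ball. Consequently
\[
|\mu|(\mathbb{R}^n)\ge|\mu|(B_R)=|\chi(0)|\,\lambda(B_R)\to\infty ,
\]
which contradicts finiteness, exactly as in the total variation blow-up proposition.

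The main obstacle is Step 1. The version of Vitali--Hahn--Saks quoted in the paper \emph{assumes} pointwise boundedness rather than deriving it, so one must supply the Nikodym boundedness theorem (or a Baire-category argument on the complete metric space of sets modulo null sets) to pass from setwise to uniform bounds. If that citation is unwelcome, I would fall back on Step 3, which needs only dominated convergence and the previous proposition. That route is fully elementary, but it requires the hypothesis $\chi(0)\ne0$ explicitly.
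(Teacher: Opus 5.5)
Your proposal is correct, and it departs from the paper exactly where the paper's argument is incomplete. The paper only observes that $\sup_{\varepsilon>0}|\mu_\varepsilon|(\Omega)=\infty$, says this violates the boundedness hypothesis of Vitali--Hahn--Saks, and concludes that no subsequence can have a countably additive limit. Read literally, that infers failure of the conclusion from failure of a sufficient condition. It also identifies unbounded total variation with failure of the setwise boundedness the theorem actually assumes, which is itself a nontrivial step.

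Your Step 1 supplies the missing link in the valid direction. Convergence along $\varepsilon_k\downarrow0$ gives $\sup_k|\mu_{\varepsilon_k}(A)|<\infty$ for each Borel $A$, and Nikodym's boundedness theorem turns this into $\sup_k|\mu_{\varepsilon_k}|(\mathbb{R}^n)<\infty$. That contradicts $|\mu_{\varepsilon_k}|(\mathbb{R}^n)=\varepsilon_k^{-n}\int|\chi|\to\infty$. This route never actually uses Vitali--Hahn--Saks and needs only $\int|\chi|>0$. It proves the stronger statement that $\mu_{\varepsilon_k}(A)$ cannot converge for every Borel $A$ at all, whatever the would-be limit. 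The price is a Baire-category-type theorem.

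Your Step 3 is a genuinely more elementary route. Dominated convergence identifies the only candidate limit on bounded sets as $\chi(0)e^{i\Phi}\,dx$, and the corollary then reduces to the earlier total-variation proposition via $|\mu|(B_R)=|\chi(0)|\,\lambda(B_R)\to\infty$. It covers the weaker hypothesis of convergence on bounded sets only, but two assumptions there are essential rather than cosmetic. First, $\chi(0)\neq0$: for $\chi$ supported in an annulus, $\mu_\varepsilon(A)\to0$ for every bounded $A$, and the zero measure is a legitimate $\sigma$-additive limit. Second, your convention that a complex measure has finite total variation on all of $\mathbb{R}^n$: $\chi(0)e^{i\Phi}\,dx$ is itself countably additive on the $\delta$-ring of bounded Borel sets.
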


\subsection{Failure of finite-dimensional positivity}

Suppose we attempt to interpret $e^{i\Phi(x)}dx$ as a characteristic
function of some signed or complex measure $\mu$:
\[
\widehat{\mu}(\xi) = e^{i\Phi(\xi)}.
\]
Bochner’s theorem states:

\begin{theorem}[Bochner\cite{Bochner1955}]
A function $\widehat{\mu}:\mathbb{R}^n\to\mathbb{C}$ is the Fourier
transform of a finite positive measure $\mu$ iff $\widehat{\mu}$ is
continuous and positive-definite.
\end{theorem}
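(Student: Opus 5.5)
We prove the two directions of Bochner's theorem separately. The forward direction is elementary; the converse is the substantive part.

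\emph{Necessity.} Suppose $\widehat{\mu}(\xi)=\int e^{i\langle \xi,x\rangle}\,d\mu(x)$ with $\mu$ a finite positive measure. Continuity follows from dominated convergence, since $|e^{i\langle\xi,x\rangle}|=1$ and $\mu(\mathbb{R}^n)<\infty$. For positive-definiteness, take points $\xi_1,\dots,\xi_N$ and complex numbers $c_1,\dots,c_N$. Then
\[
\sum_{j,k} c_j\bar c_k\,\widehat{\mu}(\xi_j-\xi_k)=\int_{\mathbb{R}^n}\Bigl|\sum_j c_j e^{i\langle\xi_j,x\rangle}\Bigr|^2 d\mu(x)\ge 0.
\]

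\emph{Sufficiency.} Let $\varphi=\widehat{\mu}$ be continuous and positive-definite.

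\textbf{Step 1 (basic consequences).} Positive-definiteness with $N=1,2$ gives $\varphi(0)\ge 0$, $\varphi(-\xi)=\overline{\varphi(\xi)}$ and $|\varphi(\xi)|\le\varphi(0)$. So $\varphi$ is bounded.

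\textbf{Step 2 (integral form of positivity).} For every $g\in L^1(\mathbb{R}^n)$ with compact support we have
\[
\iint \varphi(\xi-\eta)\,g(\xi)\overline{g(\eta)}\,d\xi\,d\eta\ge 0.
\]
This follows by approximating the double integral with Riemann sums. Continuity of $\varphi$ is what makes the sums converge.

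\textbf{Step 3 (Gaussian regularization).} Define
\[
f_\sigma(x)=(2\pi)^{-n}\int \varphi(\xi)\,e^{-\sigma|\xi|^2/2}\,e^{-i\langle\xi,x\rangle}\,d\xi .
\]
We show $f_\sigma\ge 0$. Write the Gaussian weight as a convolution of Gaussians. Then $f_\sigma(x)$ becomes a limit of expressions of the form in Step 2, with $g(\xi)=e^{-\sigma|\xi|^2}e^{-i\langle\xi,x\rangle}$ (up to constants), truncated to large balls. This is the standard Fejér/Gaussian kernel trick.

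\textbf{Step 4 (uniform mass bound).} Fourier inversion is applicable because $\varphi e^{-\sigma|\cdot|^2/2}\in L^1$. It gives
\[
\int f_\sigma(x)\,e^{-\epsilon|x|^2/2}\,dx = c\int\varphi(\xi)\,\gamma_{\sigma+\epsilon}(\xi)\,d\xi\le\varphi(0),
\]
where $\gamma$ is a Gaussian density. Letting $\epsilon\to 0$ by monotone convergence yields $\int f_\sigma\le\varphi(0)$. Hence the measures $\mu_\sigma=f_\sigma\,dx$ are positive with total mass at most $\varphi(0)$, and $\widehat{\mu_\sigma}=\varphi e^{-\sigma|\cdot|^2/2}$.

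\textbf{Step 5 (passing to the limit).} As $\sigma\to 0$, the characteristic functions $\widehat{\mu_\sigma}$ converge pointwise to $\varphi$, and the limit is continuous at $0$. Lévy's continuity theorem therefore gives tightness of $\{\mu_\sigma\}$. Prokhorov's theorem then yields a weak limit $\mu$, which is positive and finite, and $\widehat{\mu}=\varphi$ by weak convergence.

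\textbf{Main obstacle.} The hardest step is Step 3, the positivity of $f_\sigma$. The difficulty lies in justifying the interchange of limits that converts the discrete positive-definiteness condition into the integral inequality of Step 2 applied to the Gaussian-weighted, non-compactly-supported test function. I would handle it by truncating $g$ to a large ball, applying Step 2 there, and removing the truncation using the boundedness of $\varphi$ from Step 1. Tightness in Step 5 is the only other delicate point, and it is handled entirely by continuity of $\varphi$ at $0$.
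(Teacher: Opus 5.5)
The paper gives no proof of this statement. Bochner's theorem is quoted from \cite{Bochner1955} as a black box in Section~3, so there is no argument in the paper to compare yours against. Your proposal is the standard textbook proof (Fej\'er/Gaussian regularization, a uniform mass bound, then L\'evy's continuity theorem with Prokhorov), and it is correct. A few points need tightening.

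\textbf{Step 2 and the ``main obstacle''.} Riemann sums are only directly available for, say, $g\in C_c(\mathbb{R}^n)$. Extend to all $g\in L^1$ using the estimate
$|Q(g)-Q(h)|\le\varphi(0)(\|g\|_1+\|h\|_1)\|g-h\|_1$, where $Q(g)=\iint\varphi(\xi-\eta)g(\xi)\overline{g(\eta)}\,d\xi\,d\eta$; this follows from Step 1. This also removes what you call the main obstacle:
\begin{itemize}
\item the Gaussian test function $g(\xi)=e^{-\sigma|\xi|^2}e^{-i\langle\xi,x\rangle}$ is already in $L^1$, so no truncation to balls is needed;
\item Fubini is justified by $\iint|\varphi(\xi-\eta)g(\xi)g(\eta)|\le\varphi(0)\|g\|_1^2$;
\item the substitution $\xi=\eta+u$ gives directly $Q(g)=(2\pi)^n(\pi/2\sigma)^{n/2}f_\sigma(x)$, hence $f_\sigma\ge 0$.
\end{itemize}

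\textbf{Step 4.} The computation is Fubini plus the explicit Fourier transform of a Gaussian, not Fourier inversion. The resulting weight is $(2\pi\epsilon)^{-n/2}e^{-(\epsilon^{-1}+\sigma)|\xi|^2/2}$. That is $(1+\sigma\epsilon)^{-n/2}$ times a centered Gaussian density of variance $\epsilon/(1+\sigma\epsilon)$, not $\gamma_{\sigma+\epsilon}$; the bound by $\varphi(0)$ is unaffected. The inversion theorem is needed only afterwards, to identify $\widehat{\mu_\sigma}=\varphi e^{-\sigma|\cdot|^2/2}$. It applies because $f_\sigma$ is now known to be integrable, $\varphi e^{-\sigma|\cdot|^2/2}\in L^1$, and the latter is continuous, so equality holds everywhere rather than just a.e.

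\textbf{Step 5.} The mass of $\mu_\sigma$ is exactly $\widehat{\mu_\sigma}(0)=\varphi(0)$, not merely at most $\varphi(0)$. Normalize by $\varphi(0)$ before invoking L\'evy's continuity theorem, which is stated for probability measures, and apply it along a sequence $\sigma_k\downarrow 0$. Treat $\varphi(0)=0$ separately: then $\varphi\equiv 0$ by Step 1 and $\mu=0$.
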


But $e^{i\Phi(\xi)}$ is positive-definite if and only if $\Phi$ is
quadratic with a nonnegative-definite imaginary part.  Hence:

\begin{proposition}
Except in Gaussian cases, $e^{i\Phi(\xi)}$ cannot be the characteristic
function of a probability measure.  In particular, a ``Feynman measure''
with density $e^{iS(x)}$ cannot exist as a probability measure.
\end{proposition}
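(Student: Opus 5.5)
The plan is to prove a sharper statement, from which the proposition follows. \emph{If $\widehat{\mu}(\xi)=e^{i\Phi(\xi)}$ is the characteristic function of a probability measure $\mu$ on $\mathbb{R}^n$, then $\mu=\delta_a$ for some $a\in\mathbb{R}^n$, and $\Phi(\xi)\equiv\langle a,\xi\rangle \pmod{2\pi}$.}

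So the only admissible phases are linear. This is the degenerate Gaussian case, with zero covariance, which is how I will read the phrase ``except in Gaussian cases.'' Any genuinely nonlinear phase, and in particular a quadratic kinetic action $S$, is excluded. The key observation is that Bochner's theorem alone gives continuity and positive-definiteness, but the decisive constraint is the unit modulus $|\widehat{\mu}|\equiv 1$.

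\textbf{Step 1 (symmetrization).} Let $X$ and $X'$ be independent random vectors, each with law $\mu$. Then $Y=X-X'$ has characteristic function
\[
\mathbb{E}\,e^{i\langle \xi,Y\rangle}=\widehat{\mu}(\xi)\overline{\widehat{\mu}(\xi)}=|\widehat{\mu}(\xi)|^2=1 \qquad\text{for all } \xi .
\]
The constant function $1$ is the characteristic function of $\delta_0$. By the uniqueness half of Bochner's theorem (injectivity of the Fourier transform on finite measures), $Y=0$ almost surely.

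\textbf{Step 2 (degeneracy).} Since $X=X'$ almost surely with $X$ and $X'$ independent, $X$ is independent of itself. Hence every event $\{X\in B\}$ has probability $0$ or $1$, and so $X$ is almost surely equal to a constant $a$. Therefore $\mu=\delta_a$ and $\widehat{\mu}(\xi)=e^{i\langle a,\xi\rangle}$, which gives $\Phi(\xi)=\langle a,\xi\rangle \pmod{2\pi}$. If $\Phi$ is assumed continuous, this identity holds up to a single constant multiple of $2\pi$.

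\textbf{Step 3 (consequences).} First, a real quadratic phase such as $\Phi(\xi)=\tfrac12\langle\xi,Q\xi\rangle$ with $Q\neq 0$ is not linear, so it cannot arise; this contradicts the nondegenerate Gaussian possibility. Indeed, nondegenerate Gaussians $e^{-\frac12\langle\xi,C\xi\rangle}$ have modulus strictly less than $1$ away from $\xi=0$. Second, for the Feynman weight $e^{iS(x)}$, apply the same argument to every finite-dimensional time-slice marginal. Each such marginal would have to be a Dirac mass, so the kinetic term of $S$ would have to vanish. This is absurd for the Schr\"odinger, Klein--Gordon or Dirac actions. In infinite dimensions the same conclusion follows by combining this with the Bochner--Minlos statement of Section~2, since positive-definiteness is tested on finite-dimensional restrictions.

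\textbf{Main obstacle.} I expect the obstacle to be reconciling the proof with the paper's phrasing, rather than any technical difficulty. The earlier claim that $e^{i\Phi}$ is positive-definite iff $\Phi$ is ``quadratic with nonnegative imaginary part'' is not literally correct for real $\Phi$. I would therefore state explicitly that the allowed ``Gaussian'' case is the point-mass, zero-variance one. I would also note that complex $\Phi$, i.e.\ a genuine Gaussian damping $e^{-\frac12\langle\xi,C\xi\rangle}$, no longer has unit modulus and so falls outside the hypothesis.
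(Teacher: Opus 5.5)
Your argument is correct, and it takes a genuinely different and sounder route than the paper. The paper's justification is one line: Bochner's theorem plus the assertion that $e^{i\Phi}$ is positive-definite if and only if $\Phi$ is quadratic with nonnegative-definite imaginary part. That characterization is neither proved there nor true in either direction.
\begin{itemize}
\item The real quadratic $\Phi(\xi)=\xi^2$ satisfies it, since its imaginary part is zero. Yet by your Steps 1--2, $e^{i\xi^2}$ is not a characteristic function.
\item Conversely, $\Phi(\xi)=i|\xi|$ gives the Cauchy characteristic function $e^{-|\xi|}$, which is positive-definite but not Gaussian.
\end{itemize}

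Your symmetrization argument avoids positive-definiteness altogether. It uses only $|\widehat{\mu}|\equiv 1$ and Fourier uniqueness, and it gives the sharp conclusion that only point masses (zero-variance Gaussians) can occur. So it repairs the paper's reasoning rather than reproducing it. The paper's route, if its characterization held, would also cover complex $\Phi$. However, the Cauchy example shows the proposition is false in that generality. Your restriction to real $\Phi$ is therefore necessary, not a weakness, and it matches the section's standing assumption $\Phi:\mathbb{R}^d\to\mathbb{R}$.

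One tidy-up in Step 3: ``time-slice marginals'' mixes the density reading of $e^{iS}$ with the characteristic-function reading. The paper makes the same conflation.
\begin{itemize}
\item In the characteristic-functional reading, for each test function $f$ the real random variable $\langle\phi,f\rangle$ would have characteristic function $t\mapsto e^{iS(tf)}$. Your Steps 1--2 then force $t\mapsto S(tf)$ to be linear in $t$, using continuity and $S(0)=0$ to remove the $2\pi\mathbb{Z}$ ambiguity. This rules out any nonzero quadratic (kinetic) part without invoking Minlos.
\item In the literal density reading, the claim is immediate, since a probability density must be real and nonnegative.
\end{itemize}
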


\subsection{Extension to infinite dimensions: Minlos theorem\cite{Minlos1963}}

For path integrals we consider functionals on a nuclear space
$\mathcal{\Omega}$ (e.g.\ test functions).  Minlos’s theorem states:

\begin{theorem}[Bochner–Minlos]
Let $\Omega $ be a nuclear space.  A functional
$\Gamma:\Omega\to\mathbb{C}$ is the characteristic functional of a probability
measure on ${\Omega}'$ iff $\Gamma$ is positive-definite and continuous at $0$.
\end{theorem}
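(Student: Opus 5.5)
The plan is to prove the two directions separately: necessity by direct computation, and sufficiency by building consistent finite-dimensional marginals with Bochner's theorem, gluing them with Kolmogorov's extension theorem (Section~\ref{sec:prob-framework}), and then using nuclearity to show that the resulting measure lives on $\Omega'$. Throughout I assume $\Omega$ is a countably Hilbertian nuclear space, with topology given by an increasing sequence of Hilbert seminorms $\|\cdot\|_p$ such that for each $p$ there is $q>p$ with the canonical map $\Omega_q\to\Omega_p$ Hilbert--Schmidt. This is the case for $\mathcal{S}$ and $\mathcal{D}$ on compacts, which are the cases relevant here.

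\emph{Necessity.} Suppose $\Gamma(f)=\int_{\Omega'}e^{\mathrm{i}\langle\phi,f\rangle}d\mu(\phi)$. For $c_j\in\mathbb{C}$ and $f_j\in\Omega$,
\[
\sum_{j,k}c_j\bar c_k\Gamma(f_j-f_k)=\int_{\Omega'}\Bigl|\sum_j c_je^{\mathrm{i}\langle\phi,f_j\rangle}\Bigr|^2d\mu\ge0,
\]
so $\Gamma$ is positive-definite. For continuity at $0$, dominated convergence gives sequential continuity. To upgrade this to continuity, use that $\Omega$ is Fr\'echet and hence metrizable, so sequential continuity at $0$ suffices.

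\emph{Sufficiency: finite-dimensional marginals.} For each finite family $f_1,\dots,f_n$, the function $\xi\mapsto\Gamma(\sum_j\xi_jf_j)$ on $\mathbb{R}^n$ is continuous and positive-definite. By Bochner's theorem stated above, it is the Fourier transform of a probability measure $\mu_{f_1,\dots,f_n}$ on $\mathbb{R}^n$; normalization comes from $\Gamma(0)=1$, which follows from positive-definiteness together with the normalization of $\Gamma$. Uniqueness of Fourier transforms gives the consistency conditions $K_1$ and $K_2$, as well as compatibility with linear relations among the $f_j$. Kolmogorov's extension theorem then yields a probability measure $\tilde\mu$ on the algebraic dual $\Omega^*\subset\mathbb{R}^{\Omega}$ with characteristic functional $\Gamma$.

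\emph{Sufficiency: support on $\Omega'$ (main obstacle).} This is the step where nuclearity enters and where I expect the real work. Fix $\varepsilon>0$. By continuity at $0$, choose $p$ and $\delta$ with $1-\mathrm{Re}\,\Gamma(f)\le\varepsilon$ whenever $\|f\|_p\le\delta$. Positive-definiteness then gives the Sazonov-type bound
\[
1-\mathrm{Re}\,\Gamma(f)\le\varepsilon+2\delta^{-2}\|f\|_p^2 .
\]
Next, choose $q$ with $\Omega_q\to\Omega_p$ Hilbert--Schmidt, and take an orthonormal basis $(e_k)$ of $\Omega_q$ in $\Omega$, so that $\sum_k\|e_k\|_p^2<\infty$. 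I would try the Gaussian averaging trick: integrate $1-\mathrm{Re}\,\Gamma$ against the standard Gaussian on $\mathrm{span}(e_1,\dots,e_N)$, scaled by $\sigma$. This yields
\[
\int\Bigl(1-e^{-\sigma^2\sum_{k\le N}\langle\phi,e_k\rangle^2/2}\Bigr)d\tilde\mu\le\varepsilon+2\delta^{-2}\sigma^2\sum_k\|e_k\|_p^2 .
\]
Let $N\to\infty$, then choose $\sigma$ small and apply Chebyshev. The conclusion is that $\tilde\mu\bigl(\sum_k\langle\phi,e_k\rangle^2\le R^2\bigr)\ge1-C\varepsilon$ for $R$ large. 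The set $\{\sum_k\langle\phi,e_k\rangle^2<\infty\}$ is exactly the dual ball region $\Omega_{-q}\subset\Omega'$. Letting $\varepsilon\to0$ shows that $\tilde\mu$ is carried by $\bigcup_q\Omega_{-q}=\Omega'$, and restricting $\tilde\mu$ there gives the required Borel probability measure. Two technical points need care: the measurability of $\Omega_{-q}$ inside the cylinder $\sigma$-algebra, which is handled via the countable basis, and the identification of the cylinder and Borel $\sigma$-algebras on $\Omega'$.
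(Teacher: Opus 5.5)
The paper only quotes this theorem (citing Minlos) and gives no proof, so I am measuring your proposal against the standard argument. Your necessity direction, the Bochner marginals and the Gaussian-averaging estimate are all fine. The step you identify as the main obstacle, however, contains a genuine error.

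\textbf{The gap.} Your $\tilde\mu$ lives on the algebraic dual $\Omega^*$ with the cylinder $\sigma$-algebra. There, every measurable set is determined by the values $\phi(c)$, $c\in C$, for some countable $C\subset\Omega$. The condition $\sum_k\langle\phi,e_k\rangle^2<\infty$ only says that $\phi$ is $\|\cdot\|_q$-bounded on the algebraic span of the $e_k$, which is a countable-dimensional subspace. Off that span, an element of $\Omega^*$ is unconstrained. Take vectors $w_n$ independent modulo $\mathrm{span}\{e_k\}$, scaled so that $t_nw_n\to0$, and let $\psi$ be linear with $\psi=0$ on $\mathrm{span}\{e_k\}$ and $\psi(t_nw_n)=1$. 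Adding $\psi$ to $\phi$ leaves $S_q(\phi)$ unchanged but produces a discontinuous functional. So $\{S_q<\infty\}\not\subset\Omega'$, let alone $\{S_q<\infty\}=\Omega_{-q}$. Even inside $\Omega'$ the identification fails for a badly chosen basis, e.g.\ one lying in $\ker\ell$ for some $\ell\in\Omega'\setminus\Omega_{-q}$.

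The same modification argument shows that $\Omega_{-q}$ and $\Omega'$ contain no nonempty cylinder set of $\Omega^*$. They are therefore non-measurable with inner measure $0$. The measurability you say is ``handled via the countable basis'' cannot be arranged, and ``$\tilde\mu$ is carried by $\Omega'$'' can at best mean $\tilde\mu^*(\Omega')=1$. The missing idea is how to pass from control of countably many coordinates to genuine continuity on all of $\Omega$.

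\textbf{Repair.} Keep your estimate, but run it for an \emph{arbitrary} countable $C\subset\Omega$, not one fixed basis.
Take a $\|\cdot\|_q$-orthonormal Hamel basis $(u_j)$ of $\mathrm{span}(C)$; you still have $\sum_j\|u_j\|_p^2\le\|\iota_{qp}\|_{\mathrm{HS}}^2$. Deduce that almost surely $\phi$ is $\|\cdot\|_q$-bounded on $\mathrm{span}(C)$ for some $q$. Then use Hahn--Banach to extend it to an element of $\Omega'$ that agrees with $\phi$ on $C$. Since every measurable $A\supseteq\Omega'$ is determined by some such $C$, this gives $\tilde\mu^*(\Omega')=1$. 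The trace of $\tilde\mu$ on $\Omega'$ is then the required measure, and its characteristic functional is $\Gamma$.

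Two alternative routes also work:
\begin{enumerate}
\item Follow Minlos: define the cylinder measure directly on $\Omega'$. Using Hahn--Banach and your Gaussian bound, show that cylinder sets disjoint from the Alaoglu-compact polar $\{\phi:\lvert\langle\phi,f\rangle\rvert\le R\|f\|_q\}$ have measure $O(\varepsilon)$, which yields countable additivity.
\item For $\mathcal{S}$, use the Hermite basis, which is orthogonal for all the seminorms. Define $\langle\phi,f\rangle=\sum_k\phi_kf_k$ on the concentration set, and identify the characteristic functional using continuity of $\Gamma$ along $f^{(N)}\to f$ in $\mathcal{S}$.
\end{enumerate}

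The same outer-measure issue already affects your earlier claim that Kolmogorov yields a measure on $\Omega^*$, since $\Omega^*$ is not product-measurable in $\mathbb{R}^\Omega$. It is fixed the same way.

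\textbf{Smaller points.}
\begin{enumerate}
\item The statement needs $\Gamma(0)=1$ as an extra hypothesis: $\Gamma\equiv2$ is positive-definite and continuous but is not a characteristic functional of a probability measure. Your sentence about normalization assumes this circularly.
\item You only have continuity at $0$. Continuity of $\xi\mapsto\Gamma(\sum_j\xi_jf_j)$ away from $0$ needs the inequality $\lvert\Gamma(f)-\Gamma(g)\rvert^2\le 2\,\mathrm{Re}\bigl(1-\Gamma(f-g)\bigr)$.
\item Your restriction to countably Hilbertian spaces is narrower than the stated ``nuclear space''.
\end{enumerate}
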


The ``Feynman functional''
\[
\Gamma(f)=e^{iS(f)}
\]
fails positive-definiteness for any non-quadratic action $S$.  In particular
actions with first-order derivatives (Dirac action) or indefinite
quadratic forms (Lorentzian signature) violate the positivity condition.

Therefore:

\begin{proposition}
There exists no countably additive probability measure on path space whose
characteristic functional is $\Gamma(f)=e^{iS(f)}$ for the classical action
$S$ of a relativistic particle or of a Dirac field.
\end{proposition}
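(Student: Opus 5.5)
The plan is to argue by contradiction, using the Bochner--Minlos theorem stated just above. Suppose a countably additive probability measure $\mu$ on the dual $\Omega'$ of a nuclear test-function space had characteristic functional $\Gamma(f)=e^{iS(f)}$. By the Bochner--Minlos theorem, $\Gamma$ must then be continuous at $0$, positive-definite, and normalized, $\Gamma(0)=1$. It is therefore enough to show that positive-definiteness fails.

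To test positive-definiteness, I take finitely many $f_1,\dots,f_n\in\Omega$ and consider the matrix $M_{jk}=\Gamma(f_j-f_k)$. This matrix must be Hermitian and positive semidefinite.

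\textbf{Reduction to finite dimensions.} Restricting to the one-parameter family $f=s g$ for fixed $g$, the function $\varphi(s)=\Gamma(sg)$ would be the characteristic function of the real random variable $\langle\phi,g\rangle$. Two necessary conditions follow: $|\varphi(s)|\le 1$, and $\varphi(-s)=\overline{\varphi(s)}$. The second forces $S(-sg)=-S(sg)$ modulo $2\pi$. Moreover, since $|\varphi|\equiv1$ on a neighbourhood, a standard lemma says the law of $\langle\phi,g\rangle$ is a Dirac mass. Hence $S(sg)=s\,a(g)$ must be linear in $s$, and $\mu$ must be a point mass $\delta_{\phi_0}$ with $S(f)=\langle\phi_0,f\rangle$.

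For the classical action of a relativistic particle, or for the Dirac action $\int\bar\psi(i\gamma^\mu\partial_\mu-m)\psi$, the quantity $S$ is quadratic (bilinear) in the field and not linear. In the Lorentzian case it is an indefinite quadratic form $S(f)=\tfrac12 Q(f,f)$. Choosing $g$ with $Q(g,g)\neq0$ makes $s\mapsto e^{isQ(g,g)s/2}$ non-linear in the phase. Such a function is not Hermitian-symmetric in the required way unless $Q(g,g)=0$, and it cannot be positive-definite. A concrete check is the $2\times 2$ (or $3\times3$) matrix with points $0, s, 2s$, which yields a negative eigenvalue for suitable $s$.

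\textbf{Main obstacle: the Dirac case.} Here $S$ is naturally a functional of Grassmann or spinor variables rather than real test functions. I would handle it by first fixing a real realization: view $\psi$ as a $\mathbb{C}^4$-valued test function and $S(f)$ as the real-valued quadratic form obtained from the Dirac Lagrangian. Because $i\gamma^\mu\partial_\mu$ is first order, this form is indefinite; Fourier transforming gives the symbol $\gamma^\mu p_\mu-m$, which has eigenvalues of both signs. The reduction above then applies. I expect the delicate point to be precisely this justification that the relevant quadratic form is nonzero and indefinite on some $g$. I would exhibit such a $g$ explicitly via a plane wave localized in momentum near a point where $p^2\neq m^2$.
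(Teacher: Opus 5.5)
Your argument is correct. It uses the same framework as the paper: a probability measure forces $\Gamma$ to be positive-definite, so it suffices to refute positive-definiteness. The decisive step, however, is genuinely different.

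The paper never verifies the failure. It asserts that $e^{iS}$ is not positive-definite for non-quadratic $S$, for first-order (Dirac) actions, or for indefinite (Lorentzian) forms. This rests on its earlier claim that $e^{i\Phi}$ is positive-definite only for quadratic $\Phi$ with nonnegative imaginary part.

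You actually prove the failure. On a ray $s\mapsto sg$, a characteristic function of modulus one belongs to a point mass, so $S(sg)\equiv s\,a(g)\pmod{2\pi}$. By continuity and $S(0)=0$, $S$ is then exactly linear on the ray. For $S=\tfrac12 Q$ with $Q(g,g)\neq0$, already the $2\times2$ test with $f_1=0$, $f_2=sg$ fails: $\Gamma(sg)$ is even in $s$ but not real, so $[\Gamma(f_j-f_k)]$ is not Hermitian. State it that way rather than as a ``negative eigenvalue''.

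Your route buys rigor and exposes the real mechanism. The only unimodular characteristic functionals are $e^{i\langle\phi_0,f\rangle}$, i.e.\ $\mu=\delta_{\phi_0}$. So the obstruction is unimodularity, not indefiniteness or first-order derivatives. A positive-definite quadratic $S$ fails equally, and the paper's criterion conflates $e^{iQ}$ with the Gaussian $e^{-Q/2}$.

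In particular, the step you flag as delicate is unnecessary: you only need one $g$ with $S(g)\neq0$. For Dirac, the Hermitian symbol is $\gamma^0(\gamma^\mu p_\mu-m)$, not $\gamma^\mu p_\mu-m$ (which is not Hermitian). At $p=0$ it equals $-m\gamma^0\neq0$; for $m=0$, take $p=(p_0,\mathbf{0})$, where it equals $p_0$ times the identity.

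Finally, the standard particle action $-m\int\sqrt{\dot x^\mu\dot x_\mu}\,d\tau$ is not quadratic, as you claimed. Your ray argument still covers it: on a timelike ray it equals $|s|\,S(g)$ with $S(g)\neq0$.
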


In Minkowski signature, oscillatory integrals of the form
\[
\int e^{\mathrm{i}S(\omega)}\,d\mu(\omega)
\]
do not define finite measures: $|e^{\mathrm{i}S(\omega)}|=1$, and regularizations
diverge in total variation. In the infinite-dimensional setting, a functional
$\Gamma(f)=e^{\mathrm{i}S(f)}$ is seldom positive-definite; in particular, the
Dirac action involves first-order derivatives and an indefinite quadratic form.
By Bochner--Minlos, such $\Gamma$ cannot be the characteristic functional of
a probability measure on path space.

In the next section we complement this Minkowski/Euclidean picture with
Zastawniak's distributional analysis and its probabilistic reinterpretation.

\section{Zastawniak's Distributional Obstruction and Probabilistic Reinterpretation}
\label{sec:zastawniak}
\subsection{One-dimensional derivation (Fourier method)}
Consider the $(1+1)$-dimensional Dirac equation in Hamiltonian form
\[
i\partial_t\psi(t,x)=\big(-i\alpha\partial_x + m\beta\big)\psi(t,x),
\]
with $\alpha^2=\beta^2=I$ and $\{\alpha,\beta\}=0$. Taking the spatial Fourier transform
\[
\widehat\psi(t,k)=\int_{\mathbb{R}} e^{-ikx}\psi(t,x)\,dx,
\]
one obtains the ODE
\[
i\partial_t\widehat\psi(t,k) = (\alpha k + m\beta)\widehat\psi(t,k),
\]
whose solution is
\[
\widehat\psi(t,k)=\exp\!\big[-it(\alpha k + m\beta)\big]\widehat\psi(0,k).
\]
Using the matrix exponential identity (since $(\alpha k + m\beta)^2=\omega(k)^2 I$,
$\omega(k)=\sqrt{k^2+m^2}$),
\[
\exp\!\big[-it(\alpha k + m\beta)\big] = \cos(\omega t)I - i\frac{\sin(\omega t)}{\omega}(\alpha k + m\beta).
\]
Returning to position space, multiplication by $k$ corresponds to $-i\partial_x$,
so the solution can be written schematically as
\[
\psi(t,x) = K_0(t,\cdot)*\psi(0,\cdot) + K_1(t,\cdot)*\partial_x\psi(0,\cdot),
\]
showing explicit dependence on the spatial derivative of the initial data.
The short-time expansion yields terms proportional to $\delta(x-y)$ and
$\partial_x\delta(x-y)$ in the propagator.
\subsection{Three-dimensional derivation (summary)}
The same Fourier method in $(3+1)$D leads to
\[
\widehat\psi(t,\mathbf{k}) = \cos(\omega t)\widehat\psi(0,\mathbf{k})
- i\frac{\sin(\omega t)}{\omega}(\mathbf{\alpha}\cdot\mathbf{k} + m\beta)\widehat\psi(0,\mathbf{k}),
\]
with $\omega(\mathbf{k})=\sqrt{|\mathbf{k}|^2+m^2}$. Inverse transforming gives
a convolution formula involving derivatives of the delta distribution,
so short-time expansions of the propagator contain derivative-of-delta terms
(e.g., $\alpha\cdot\nabla\delta(\mathbf{x}-\mathbf{y})$).

\subsection{Dirac propagator and derivative-of-delta structure}
Consider again the $1+1$ Dirac equation
\[
\mathrm{i}\partial_t \psi(x,t)
 = \left(
     -\mathrm{i}\sigma_3 \partial_x + m\sigma_1
   \right) \psi(x,t)
 =: H\psi(x,t),
\]
with propagator $e^{-\mathrm{i}tH}$. The kernel
\[
K(t,x) = e^{-\mathrm{i}tH}\delta(x)
\]
satisfies $K(0,x)=\delta(x)\mathbb{I}_2$. Expanding for small $t$,
\[
e^{-\mathrm{i}tH}
 = \mathbb{I} - \mathrm{i}tH - \frac{t^2}{2}H^2 + O(t^3),
\]
and applying to $\delta$ gives
\[
K(t,x)
 = \delta(x)\mathbb{I}
   - t\sigma_3\,\partial_x\delta(x)
   - \mathrm{i}tm\sigma_1\delta(x)
   - \frac{t^2}{2}H^2\delta(x) + O(t^3).
\]
Thus $K(t,\cdot)$ is a matrix-valued distribution involving $\delta$,
$\partial_x\delta$ and higher derivatives. For test functions $\varphi$,
\[
\langle K(t,\cdot),\varphi\rangle
 = \varphi(0)\mathbb{I}
   + t\sigma_3\varphi'(0)
   - \mathrm{i}tm\sigma_1\varphi(0) + \cdots.
\]

From a purely analytic standpoint this shows that $K(t,\cdot)$ is a
distribution in $\mathcal{D}'(\mathbb{R})$, not a finite measure.
Zastawniak's nonexistence theorems \cite{Zastawniak1989,Zastawniak1994} show
that this structure persists and prevents any measure-valued interpretation
of Dirac path integrals.
\subsection{Probabilistic reading: failure of transition densities}
From a probabilistic point of view, if there were a Markov process
$(X_t)$ on $\mathbb{R}$ representing Dirac evolution, there would exist
a family of nonnegative transition densities $p(t,x,y)$ such that
\[
\psi(t,x) = \int_{\mathbb{R}} p(t,x,y)\psi_0(y)\,dy
\]
for all suitable initial data $\psi_0$, and
\[
p(t+s,x,y) = \int_{\mathbb{R}} p(t,x,z)p(s,z,y)\,dz.
\]

However, the distributional structure of $K(t,x)$ shows:
\begin{itemize}
  \item $K(t,\cdot)$ acts on test functions by evaluating both $\varphi(0)$
        and $\varphi'(0)$; no finite signed measure can reproduce this.
  \item any attempt to identify $p(t,x,y)$ with matrix elements of $K(t,x-y)$
        leads to distributions rather than functions.
  \item derivative-of-delta terms inevitably produce sign changes and
        cannot be nonnegative functions.
\end{itemize}
Thus there is no family $p(t,x,y)\ge0$ with the same action as $K(t)$.

\begin{proposition}
The Dirac propagator $K(t,\cdot)$ cannot be represented as a finite signed or
complex measure on $\mathbb{R}$ for any $t>0$. Consequently, no family of
nonnegative scalar functions $p(t,x,y)$ can reproduce Dirac evolution and
	satisfy the Chapman--Kolmogorov equations\cite{KolmogorovFoundations},\cite{FellerV1}
\end{proposition}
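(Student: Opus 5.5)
The plan is a proof by contradiction through the Riesz representation theorem. A finite complex (matrix-valued) measure $\mu_t$ on $\mathbb{R}$ defines a functional $\varphi\mapsto\int\varphi\,d\mu_t$ that is bounded in the supremum norm:
\[
\Bigl|\int\varphi\,d\mu_t\Bigr|\le|\mu_t|(\mathbb{R})\,\|\varphi\|_\infty .
\]
Conversely, any functional on $C_0(\mathbb{R})$ bounded in that norm comes from such a measure. So the first assertion of the Proposition reduces to showing that, for each fixed $t>0$, the functional $\varphi\mapsto\langle K(t,\cdot),\varphi\rangle$ is \emph{not} bounded in $\|\cdot\|_\infty$ on test functions. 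The second assertion then follows directly. Suppose nonnegative scalar $p(t,x,y)$ reproduced the evolution $\psi(t,x)=\int p(t,x,y)\psi_0(y)\,dy$. For each $x$, the measure $p(t,x,y)\,dy$ would be a finite (sub-)probability measure by Chapman--Kolmogorov normalization, so $K(t,x-\cdot)$ would be a finite measure, contradicting the first part.

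For the unboundedness, I would start from the small-$t$ expansion displayed above:
\[
\langle K(t,\cdot),\varphi\rangle=\varphi(0)\mathbb{I}+t\sigma_3\varphi'(0)-\mathrm{i}tm\sigma_1\varphi(0)+\cdots .
\]
I would take test functions $\varphi_n(x)=\chi(x)\sin(nx)$ with a fixed bump $\chi$ satisfying $\chi(0)=1$. These have $\|\varphi_n\|_\infty\le\|\chi\|_\infty$, while $\varphi_n'(0)=n\to\infty$. The $\sigma_3$ entry of the pairing then grows like $tn$, which is incompatible with a total variation bound. To cover every $t>0$ rather than only small $t$, I would use the semigroup law: $K(t)=K(t/N)^{*N}$, and a measure-valued $K(t)$ would make every composite bounded as well. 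More precisely, I would try to transfer the contradiction from small times to arbitrary $t$ via $K(-s)K(t)=K(t-s)$.

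The main obstacle is exactly this transfer, together with control of the remainder $O(t^2)$. The expansion is formal: $H^2\delta$ contains $\partial_x^2\delta$, so the remainder is not small in sup norm against $\varphi_n$, and it can in principle cancel the $\varphi'(0)$ term. I would first attempt to work with the exact Fourier symbol
\[
\cos(\omega t)-\mathrm{i}\,\sin(\omega t)\,(\sigma_3k+m\sigma_1)/\omega
\]
and test whether it is the Fourier transform of a finite measure. Here I expect trouble. For large $k$ the diagonal entries behave like $e^{\mp\mathrm{i}kt}$, which are transforms of the point masses $\delta(x\mp t)$, and the remaining parts look like bounded Bessel-type functions supported in $|x|\le t$. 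If that computation holds, $K(t,\cdot)$ for fixed $t>0$ is in fact a finite complex measure, and the unboundedness argument fails. The derivative-of-$\delta$ behaviour would then survive only in the generator $\lim_{t\downarrow0}(K(t)-\mathbb{I})/t$.

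So the decisive step is checking the large-$|k|$ asymptotics of the exact symbol, before trusting the expansion. If they confirm the measure structure, the honest conclusion is that the first assertion, as worded, requires the additional restriction to the infinitesimal, $t\downarrow0$, regime in order to be proved.
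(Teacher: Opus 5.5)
Your suspicion is correct, and the computation you defer settles the question. The first sentence of the Proposition is \emph{false} for the $1+1$ Dirac equation, so neither your Riesz argument nor the paper's argument can establish it.

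Use $\mathcal{F}^{-1}[\sin(\omega t)/\omega]=\tfrac12 J_0\bigl(m\sqrt{t^2-x^2}\bigr)\mathbf{1}_{\{|x|<t\}}$ (the retarded Klein--Gordon kernel), together with $\cos(\omega t)=\partial_t[\sin(\omega t)/\omega]$ and $k\leftrightarrow-\mathrm{i}\partial_x$. Then the exact symbol inverts to
\[
K(t,x)=\begin{pmatrix}\delta(x-t)&0\\0&\delta(x+t)\end{pmatrix}
-\frac{m}{2}\Bigl[\frac{J_1\bigl(m\sqrt{t^2-x^2}\bigr)}{\sqrt{t^2-x^2}}\,(t\,\mathbb{I}+x\,\sigma_3)+\mathrm{i}\,J_0\bigl(m\sqrt{t^2-x^2}\bigr)\,\sigma_1\Bigr]\mathbf{1}_{\{|x|<t\}} .
\]
Since $|J_0|\le 1$ and $|J_1(z)|\le |z|/2$, the bracket is bounded on $[-t,t]$. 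This gives $\|\langle K(t,\cdot),\varphi\rangle\|\le(1+mt+m^2t^2)\|\varphi\|_\infty$. So $K(t,\cdot)$ is a finite matrix-valued complex measure for every $t>0$: point masses on the light cone plus a bounded Bessel density inside it. Your test functions $\chi\sin(nx)$ therefore cannot produce blow-up. Expanding this kernel in $t$ reproduces the paper's $\varphi(0)\mathbb{I}+t\sigma_3\varphi'(0)-\mathrm{i}tm\sigma_1\varphi(0)+\cdots$. The $\varphi'(0)$ term is merely the first Taylor coefficient of $\mathrm{diag}(\varphi(t),\varphi(-t))$.

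This is exactly the step on which the paper's argument rests. It truncates $e^{-\mathrm{i}tH}\delta$ at first order and attributes the $\partial_x\delta$ term to $K(t,\cdot)$. That term actually belongs to the generator $-\mathrm{i}H$, and the full series resums: for $m=0$, formally $\sum_n\frac{(-t)^n}{n!}\delta^{(n)}=\delta(\cdot-t)$, a point mass.

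So state your conclusion more sharply than ``requires an additional restriction in order to be proved''. At fixed $t>0$ the assertion is false; what is true is the infinitesimal statement that $-\sigma_3\partial_x\delta-\mathrm{i}m\sigma_1\delta$ is not a measure. The fixed-$t$ obstruction is genuinely a $3+1$ phenomenon. There $\sin(\omega t)/\omega$ is already the surface measure $\delta(|\mathbf{x}|-t)/(4\pi|\mathbf{x}|)$ plus a bounded part, and $K(t,\cdot)$ contains its first derivatives. Your Riesz argument does work in that setting with $\chi(\mathbf{x})\sin\bigl(n(|\mathbf{x}|-t)\bigr)$, whose pairing with $K(t,\cdot)$ is of order $nt$. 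This is consistent with the literature: Zastawniak's nonexistence theorem concerns four space-time dimensions, while in $1+1$ Ichinose constructed a countably additive matrix-valued path-space measure.

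Your derivation of the second sentence from the first therefore collapses, but the second sentence itself survives by a more elementary argument. A scalar kernel acting componentwise would force $K(t,x-y)=p(t,x,y)\,\mathbb{I}$. The exact kernel rules this out in two ways:
\begin{itemize}
\item it puts its singular parts at different places in the two diagonal entries ($x-y=t$ versus $x-y=-t$);
\item for $m\neq0$, it has the nonzero, purely imaginary off-diagonal entry $-\tfrac{\mathrm{i}m}{2}J_0\bigl(m\sqrt{t^2-(x-y)^2}\bigr)$ on $|x-y|<t$.
\end{itemize}
So no scalar function, nonnegative or not, reproduces the evolution, and Chapman--Kolmogorov is not needed. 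Note also that on $\mathbb{R}\times\{1,2\}$ the massless evolution is a deterministic Markov flow. What actually destroys positivity is the complex coupling $-\mathrm{i}m\sigma_1$, not a derivative of $\delta$.
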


This is precisely the content of Zastawniak's nonexistence of a Dirac path
space measure, but now expressed in the language of Markov semigroups and
transition densities.

\subsection{Wiener paths, differentiability and mutual singularity}

A second probabilistic ingredient is the geometry of paths under Wiener
measure. Brownian paths are almost surely continuous, nowhere differentiable
and of infinite variation \cite{Kolmogorov1933,Levy1954,RY1999,Dudley2002}.
In particular, any path functional involving a classical derivative
$\dot\omega(s_0)$ is undefined almost surely with respect to Wiener measure.

Hyperbolic equations like telegrapher or Dirac equations are naturally
associated with finite-speed propagation and, in probabilistic models,
with processes whose paths are piecewise $C^1$ with bounded derivative
(e.g.\ velocity-jump processes). The path measures of such processes are
mutually singular with Wiener measure.

Thus, even before addressing distributional kernels, the underlying path
geometries for hyperbolic and parabolic equations are incompatible: the
Brownian path space is the wrong sample space for hyperbolic dynamics.
\subsubsection{Nowhere differentiability of Brownian motion}
\paragraph{Lévy’s modulus of continuity}

The sharpest statement is Lévy’s theorem.

\begin{theorem}[Lévy modulus of continuity]
\label{thm:levy}
Let $(W_t)$ be Brownian motion on $[0,T]$. Then almost surely,
\[
\limsup_{h \downarrow 0}
  \frac{|W_{t+h}-W_t|}
       {\sqrt{2h\log(1/h)}} = 1,
\qquad \text{uniformly in } t\in[0,T].
\]
\end{theorem}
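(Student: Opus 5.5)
We prove the two halves of Theorem~\ref{thm:levy} separately. Write $h_n$ for a sequence tending to $0$ and $m(h)=\sqrt{2h\log(1/h)}$. The limsup is read as $\limsup_{h\downarrow0}\sup_{0\le t\le T-h}|W_{t+h}-W_t|/m(h)=1$; this uniform modulus is what ``uniformly in $t$'' must mean, since pointwise in fixed $t$ the correct normalization is the law of the iterated logarithm. By Brownian scaling we may take $T=1$. The only inputs are the Gaussian tail bounds
\[
\frac{x}{1+x^2}\frac{e^{-x^2/2}}{\sqrt{2\pi}}\le \mathbb{P}(Z>x)\le \frac{e^{-x^2/2}}{x\sqrt{2\pi}}
\]
for $Z\sim N(0,1)$, independence of increments over disjoint intervals, and Borel--Cantelli.

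\emph{Lower bound ($\ge 1$).} Fix $\delta\in(0,1)$ and consider the $2^n$ disjoint dyadic increments $W_{k2^{-n}}-W_{(k-1)2^{-n}}$, which are i.i.d.\ $N(0,2^{-n})$. For a single increment,
\[
\mathbb{P}\big(|\Delta_k|\le(1-\delta)m(2^{-n})\big)\le 1-c\,2^{-n(1-\delta)^2}/\sqrt n .
\]
By independence, the probability that all $2^n$ increments lie below this level is at most
\[
\exp\!\big(-c\,2^{n(1-(1-\delta)^2)}/\sqrt n\big),
\]
which is summable in $n$. Borel--Cantelli then gives $\sup_t|W_{t+h}-W_t|\ge(1-\delta)m(h)$ along $h=2^{-n}$ for all large $n$, and letting $\delta\downarrow0$ gives the lower bound.

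\emph{Upper bound ($\le 1$).} This is the main obstacle, because it requires control over all pairs $(s,t)$, not only dyadic ones. Fix $\varepsilon>0$ and consider pairs of grid points $i2^{-n},j2^{-n}$ with $0<k=j-i\le 2^{n\varepsilon}$. There are at most $2^{n(1+\varepsilon)}$ such pairs. Each increment satisfies
\[
\mathbb{P}\big(|W_{j2^{-n}}-W_{i2^{-n}}|>(1+\varepsilon)m(k2^{-n})\big)\lesssim (k2^{-n})^{(1+\varepsilon)^2}.
\]
Summing over the pairs gives roughly $2^{n(1+\varepsilon)}\,2^{-n(1-\varepsilon)(1+\varepsilon)^2}$, which is summable once $\varepsilon$ is small, since the exponent is negative. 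Borel--Cantelli therefore controls all such grid increments eventually.

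To pass to arbitrary $s<t$ with $h=t-s$ small, choose $n$ with $2^{-n(1-\varepsilon)}\approx h$. Approximate $s$ and $t$ by grid points at level $n$, and bound the leftover pieces by dyadic chaining through the finer levels $m>n$. The error from chaining is
\[
\sum_{m>n}(1+\varepsilon)m(2^{-m}),
\]
which is $o(m(h))$ because $2^{-n}\ll h$. Combining this with the grid estimate and the monotonicity of $m$ near $0$ gives $|W_t-W_s|\le(1+C\varepsilon)m(h)$ for all small $h$. Letting $\varepsilon\downarrow0$ completes the upper bound and hence the theorem.

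The delicate bookkeeping is the choice of the ratio $k\le 2^{n\varepsilon}$. It must be large enough that grid points approximate every scale $h$ comparable to $2^{-n(1-\varepsilon)}$, yet small enough that the union bound over $2^{n(1+\varepsilon)}$ pairs still sums.
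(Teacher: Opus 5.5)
The paper gives no proof of this statement; it quotes L\'evy's theorem as a classical fact, so your argument has to stand on its own. Your reading of ``uniformly in $t$'' as $\limsup_{h\downarrow0}\sup_{0\le t\le T-h}|W_{t+h}-W_t|/m(h)=1$ is the right one. Pointwise in $t$ the limsup is $0$ a.s.\ by the law of the iterated logarithm, so the paper's wording needs exactly this interpretation. Your lower bound is correct.

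The upper bound, however, fails at exactly the step you call delicate. You use the same $\varepsilon$ for the range $k\le 2^{n\varepsilon}$ and for the threshold $(1+\varepsilon)m(k2^{-n})$. With that choice the exponent in your union bound is
\[
(1+\varepsilon)-(1-\varepsilon)(1+\varepsilon)^2=\varepsilon^2(1+\varepsilon)>0,
\]
which is not negative. The sum is therefore of order $2^{n\varepsilon^2(1+\varepsilon)}/\sqrt n$ and diverges for every $\varepsilon>0$, so Borel--Cantelli gives nothing. This is not an artifact of bounding $(k2^{-n})^{(1+\varepsilon)^2}$ by its value at $k=2^{n\varepsilon}$. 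Since $\sum_{k\le K}k^{(1+\varepsilon)^2}\asymp K^{1+(1+\varepsilon)^2}$, the sum really is dominated by the largest $k$ and has that order. Heuristically, the $2^{n(1+\varepsilon)}$ overlapping pairs overcount the roughly $1/h=2^{n(1-\varepsilon)}$ effectively independent increments at scale $h\approx 2^{-n(1-\varepsilon)}$ by a factor $2^{2n\varepsilon}$. That factor outweighs the gain $h^{(1+\varepsilon)^2-1}$.

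The repair is to decouple the two parameters:
\begin{itemize}
\item[(i)] Allow only $k\le 2^{n\delta}$, with $(1+\delta)<(1-\delta)(1+\varepsilon)^2$. For example, $\delta=\varepsilon/2$ makes the exponent $-\varepsilon+\varepsilon^3/2<0$.
\item[(ii)] Pick $n$ with $2^{-(n+1)(1-\delta)}\le h<2^{-n(1-\delta)}$. Then the level-$n$ grid increment has $k\le h2^n<2^{n\delta}$ and is covered by the Borel--Cantelli event.
\item[(iii)] The chaining step still works: $2^{-n}/h\le 2^{1-n\delta}\to0$, so the remainder $\sum_{j>n}(1+\varepsilon)m(2^{-j})=O(\sqrt{n2^{-n}})$ is negligible against $m(h)$.
\end{itemize}
Also rename the level index in that sum, since as written it clashes with the function $m$. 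With these changes your argument becomes the standard proof, as in Karatzas--Shreve or M\"orters--Peres, and yields $\limsup\le 1+\varepsilon$.
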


This shows that typical increments satisfy
\[
|W_{t+h}-W_t| \approx \sqrt{2h\log(1/h)}.
\]
Thus any quotient of the form
\[
\frac{W_{t+h}-W_t}{h}
   \approx \sqrt{\frac{2\log(1/h)}{h}}
   \;\;\longrightarrow\;\; \infty
\]
almost surely.

The above facts yield the following classical theorem.

\begin{theorem}[Brownian motion is almost surely nowhere differentiable]
\label{thm:nowhere_diff}
Let $(W_t)$ be standard Brownian motion.
Then with probability $1$, for every $t\in[0,T]$,
\[
\lim_{h\to 0} \frac{W_{t+h}-W_t}{h}
\]
does not exist in $\mathbb{R}$ (finite or infinite).  Hence $t\mapsto W_t$ is
nowhere differentiable almost surely.
\end{theorem}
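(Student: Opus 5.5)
The plan is to split the statement into two parts: exclude a \emph{finite} derivative by the classical Paley--Wiener--Zygmund / Dvoretzky--Erd\H{o}s--Kakutani counting argument, and then exclude an \emph{infinite} derivative by reducing it to a point of increase or decrease. Theorem~\ref{thm:levy} is not enough on its own. As stated, it controls $\limsup_{h\downarrow0}$ uniformly in $t$, which only says that \emph{some} increments are large near each $t$. Differentiability at a fixed $t$ requires the whole difference quotient to converge, and at exceptional (random, uncountably many) times the increments can be much smaller than $\sqrt{2h\log(1/h)}$ along all $h$. So Lévy's modulus serves only as heuristic motivation. The proof must quantify over all $t$ at once through a discretization.

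\textbf{Step 1 (finite derivative).} Work on $[0,1]$; the case of $[0,T]$ follows by scaling and countable unions. Suppose $W$ is differentiable at some $t\in[0,1]$ with $|W'(t)|<\infty$. Then there are $M\in\mathbb{N}$ and $\delta>0$ with $|W_s-W_t|\le M|s-t|$ for $|s-t|\le\delta$. For $n$ large, let $k=\lceil nt\rceil$. Each of the three consecutive increments $|W_{(j+1)/n}-W_{j/n}|$, $j=k,k+1,k+2$, is then bounded by $7M/n$ via the triangle inequality through $W_t$. Define
\[
A_{M,n}=\bigcup_{k=0}^{n}\ \bigcap_{j=k}^{k+2}\Bigl\{\,\bigl|W_{(j+1)/n}-W_{j/n}\bigr|\le 7M/n\Bigr\}.
\]
The event ``some $t$ with a finite derivative of size $\le M$ exists'' is contained in $\liminf_n A_{M,n}$. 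The increments are independent $N(0,1/n)$ variables, so
\[
\mathbb{P}(A_{M,n})\le (n+1)\bigl(\mathbb{P}(|N(0,1)|\le 7M/\sqrt n)\bigr)^3\le (n+1)\Bigl(\tfrac{14M}{\sqrt{2\pi n}}\Bigr)^3\to0 .
\]
Fatou's lemma then gives $\mathbb{P}(\liminf_n A_{M,n})=0$, and a union over $M\in\mathbb{N}$ finishes this step. The essential point is using three increments rather than one, so that the cube $n^{-3/2}$ beats the $n$ choices of $k$.

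\textbf{Step 2 (infinite derivative).} Suppose $\lim_{h\to0}(W_{t+h}-W_t)/h=+\infty$ (the case $-\infty$ follows by the symmetry $W\mapsto -W$). Then for some $\delta>0$ we have $W_s<W_t<W_u$ whenever $t-\delta<s<t<u<t+\delta$. So $t$ is a \emph{point of increase} of $W$. I would then invoke the Dvoretzky--Erd\H{o}s--Kakutani theorem: almost surely Brownian motion has no points of increase. This step is the main obstacle, since that theorem is much deeper than Step 1. If a self-contained argument is preferred, I would try the standard second-moment / first-passage proof. Fix rationals $q<r$ and discretize: a point of increase on scale $1/n$ forces the walk $W_{j/n}$ to stay below its value at some index on the left and above it on the right for about $n(r-q)$ steps. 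By the reflection principle and the $\mathbb{P}(\text{stay positive for } m \text{ steps})\asymp m^{-1/2}$ estimate, the expected number of such indices is $O(1)$, but a refined count shows the probability tends to $0$. If this becomes too long, I would simply cite the theorem, which is consistent with the paper's practice of citing classical results.

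Combining Steps 1 and 2, with probability one there is no $t\in[0,T]$ at which $(W_{t+h}-W_t)/h$ has a limit in $[-\infty,\infty]$. This gives the theorem as stated.
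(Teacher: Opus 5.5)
Your argument is correct, granting the citation of the Dvoretzky--Erd\H{o}s--Kakutani theorem on points of increase. It takes a genuinely different route from the paper's proof, which is a single step: fix $\omega$ in the probability-one event of Theorem~\ref{thm:levy}, assert $|W_{t+h}-W_t|/h\sim\sqrt{2\log(1/h)/h}$ as $h\downarrow0$ at every $t$, and conclude.

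Your suspicion that this cannot work is justified, for two reasons. First, L\'evy's modulus concerns $\sup_t|W_{t+h}-W_t|$ and gives no lower bound on increments based at a prescribed $t$. In fact, by the law of the iterated logarithm and Fubini, almost surely $|W_{t+h}-W_t|=O(\sqrt{h\log\log(1/h)})=o(\sqrt{h\log(1/h)})$ as $h\downarrow0$ at Lebesgue-almost every $t$, so the asserted asymptotics fail at typical times. Second, even if $|W_{t+h}-W_t|/h\to\infty$ were granted, it would exclude only a finite limit, not the limits $\pm\infty$ that the statement explicitly rules out.

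Your Step~1 supplies the simultaneous-in-$t$ control through the three-increment count (the Paley--Wiener--Zygmund/Dvoretzky--Erd\H{o}s--Kakutani argument); the constants $7M/n$ and $(n+1)(14M/\sqrt{2\pi n})^3\to0$ check out. Step~2 is exactly what the ``finite or infinite'' clause requires, at the price of a deep input, and citing that theorem is the sensible choice. Your self-contained sketch puts all the work into ``a refined count''. In Peres's proof this is a bound of order $1/\log n$, obtained by comparing $\sum_k p_kp_{n-k}=O(1)$ with $\sum_{k\le n/2}p_k^2\asymp\log n$, where $p_k=\mathbb{P}(S_1\ge0,\dots,S_k\ge0)\asymp k^{-1/2}$.

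\textbf{Two small corrections.}

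\emph{Why Theorem~\ref{thm:levy} fails.} Your stated reason is a bit misaimed. A per-$t$ lower bound of order $\sqrt{h_k\log(1/h_k)}$ along a single sequence $h_k\downarrow0$ would already exclude a finite derivative, so ``$\limsup$ versus full convergence'' is not the issue. The issue is that no per-$t$ bound follows at all. Also, the times with small increments are the typical ones, not exceptional ones; the exceptional set is the fast points attaining L\'evy's modulus.

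\emph{Endpoints.} At $t=0$, and at $t=T$ if $W$ is defined only on $[0,T]$, the limit is one-sided. There Step~2 yields only $W_u>W_0$ for $u\in(0,\delta)$, not a point of increase. A one-sided version of Step~2 would in fact be false at random times: almost surely, the right derivative at the location of the minimum of $W$ on $[0,T]$ is $+\infty$. Because the endpoints are fixed times, the case is easy to close. The law of the iterated logarithm, applied to $W$ at $0$ and to the reversed motion $s\mapsto W_{T-s}-W_T$ at $T$, shows that the one-sided difference quotients take arbitrarily large values of both signs.
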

\begin{proof}
Fix $\omega$ in the event of probability one on which Theorem~\ref{thm:levy}
holds.  For this $\omega$,
\[
\left|
\frac{W_{t+h}-W_t}{h}
\right|
=
\frac{|W_{t+h}-W_t|}
     {h}
\sim
\sqrt{\frac{2\log(1/h)}{h}}
\qquad\text{as } h\downarrow 0.
\]
The right-hand side diverges to $+\infty$.  Therefore the derivative cannot
exist at any $t$.  Since the exceptional set is of probability zero, the result
follows.
\end{proof}
\paragraph{Derivative-dependent functionals are undefined a.s.}

Let $F$ be a functional of the path of the form
\[
F(x,\omega) = \Psi\!\left(
  x,\; \{\omega(s)\}_{0\le s\le t},\; \dot\omega(s_0)
\right)
\]
for some $s_0\in[0,t]$.
Since $\dot\omega(s_0)$ does not exist on a set of full measure, we conclude:

\begin{proposition}
If $F$ depends on any classical derivative $\dot\omega(s_0)$, then
$F(x,\omega)$ is undefined on a set of Wiener measure $1$.
Therefore the expectation $\mathbb{E}^{\mu}[F]$ cannot be defined.
\end{proposition}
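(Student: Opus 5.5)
The plan is to reduce the statement directly to Theorem~\ref{thm:nowhere_diff}. First, fix what ``$F$ depends on $\dot\omega(s_0)$'' means: $F(x,\omega)$ is defined only for those $\omega$ at which the classical derivative $\dot\omega(s_0)=\lim_{h\to0}(\omega(s_0+h)-\omega(s_0))/h$ exists in $\mathbb{R}$. Moreover, $\Psi$ genuinely uses this argument, so that no value can be assigned to $F$ without it. With this reading, the domain of $F(x,\cdot)$ is contained in
\[
N_{s_0}=\Bigl\{\omega\in C_{(0,T)}^{x_0} : \lim_{h\to0}\tfrac{\omega(s_0+h)-\omega(s_0)}{h}\ \text{exists in } \mathbb{R}\Bigr\}.
\]

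The key step is to show $\mu_w^{x_0}(N_{s_0})=0$. Theorem~\ref{thm:nowhere_diff} gives an event of probability one on which the derivative fails to exist at every $t\in[0,T]$, in particular at $t=s_0$. Hence $N_{s_0}$ is contained in a null set. For a fixed $s_0$ one can even avoid the uniform statement. Measurability is handled by writing $N_{s_0}$ (restricted to rational $h$, which suffices by continuity of paths) as a countable intersection and union of cylinder-type events, so it lies in $\mathcal{B}_{(0,T)}^{x_0}$. Alternatively, the event $\{|\omega(s_0+h)-\omega(s_0)|\le Ch\}$ has Wiener probability $P(|Z|\le C\sqrt h)=O(C\sqrt h)\to0$ for Gaussian $Z$. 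Therefore the complement of $N_{s_0}$ has measure one, and on it $F(x,\omega)$ is undefined. This is the first assertion.

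For the second assertion, recall that $\mathbb{E}^{\mu}[F]=\int F\,d\mu_w^{x_0}$ requires $F$ to be a measurable function defined $\mu$-almost everywhere. Any function is determined only up to null sets, and here the set where $F$ has a value is itself null. So every possible extension of $F$ to the full-measure complement is arbitrary, and different choices yield different integrals. Consequently no canonical expectation exists.

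The main obstacle is the interpretive step. I must make precise that $F$ cannot be redefined on the full-measure set, e.g.\ by replacing $\dot\omega(s_0)$ with a stochastic limit such as difference quotients or Stratonovich-type regularizations. I will therefore state explicitly that the proposition concerns the classical pointwise derivative. I will also remark that the difference quotients $(\omega(s_0+h)-\omega(s_0))/h$ have variance $1/h\to\infty$ and do not converge in probability, so no $L^0$-limit substitute is available either.
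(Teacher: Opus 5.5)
Your proposal is correct and follows the paper's own route. The paper likewise deduces the proposition directly from Theorem~\ref{thm:nowhere_diff}: $\dot\omega(s_0)$ fails to exist on a set of full Wiener measure, so $F$ is undefined there and $\mathbb{E}^{\mu}[F]$ has no meaning. Your additions refine that one-line argument rather than replace it: the measurability of $N_{s_0}$ via rational $h$, the fixed-$s_0$ Gaussian bound $\mathbb{P}(|Z|\le C\sqrt{h})\to 0$ (which usefully avoids needing the uniform-in-$t$ statement at all), and the explanation that $F$ is defined only on a null set, so no canonical integral exists.
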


This immediately rules out all attempts to express Dirac or telegrapher
solutions via path integrals over Wiener measure if the representation
requires path derivatives\cite{Brzeźniak1989}.

\subsection{Telegrapher equation as a positive benchmark}

The telegrapher equation
\[
\partial_t^2 u + 2\lambda\partial_t u = c^2\partial_x^2 u
\]
admits a stochastic representation via the Kac velocity-jump process
\cite{Kac1974,GriegoHersh1969,GriegoHersh1971}. The process $(X_t,V_t)$
evolves with
\[
\dot X_t = V_t,\quad V_t\in\{\pm c\},
\]
and $V_t$ flips at Poisson rate $\lambda$. The transition density $p(t,x)$
for $X_t$ is a positive function that solves the telegrapher equation in
weak sense and satisfies the Chapman--Kolmogorov property. The associated
path measure lives on piecewise linear paths with slopes $\pm c$.

From the point of view of numerical simulation, telegrapher equations thus
provide a natural hyperbolic analogue of Brownian motion for parabolic
equations. In later work, one can compare Monte Carlo simulation of the
telegrapher equation with deterministic numerical schemes and with formal
Dirac path integral approximations.

\medskip

In contrast, the Dirac equation has no such positive Markovian representation:
its kernel is distributional and matrix-valued, and its natural field theory
formulation uses Grassmann variables. The telegrapher equation is therefore
a useful testbed precisely because it \emph{does} admit a true probability
measure, highlighting what fails for Dirac.

\section{Scalar vs Fermionic Fields: Klein--Gordon and Dirac}
\label{sec:kg-dirac}

\subsection{Euclidean Klein--Gordon and subordinated Brownian motion}

The Minkowski Klein--Gordon equation
\[
(\partial_t^2 - \Delta_x + m^2)u = 0
\]
becomes
\[
(\partial_\tau^2 + \Delta_x + m^2)u = 0
\]
after Wick rotation $t=-\mathrm{i}\tau$, an elliptic equation. This operator
does not itself generate a Markov semigroup, but the associated relativistic
Hamiltonian
\[
H = \sqrt{-\Delta_x + m^2}
\]
is positive and self-adjoint, and $e^{-tH}$ is a positivity-preserving
contraction semigroup on $L^2$.

Bochner subordination expresses $e^{-tH}$ as a mixture of heat semigroups:
\[
e^{-t\sqrt{-\Delta+m^2}} = \int_0^\infty e^{-s(-\Delta)}\,\nu_t(ds),
\]
with $\nu_t$ a L\'evy subordinator. Writing $(B_s)$ for Brownian motion and
$(T_t)$ for the subordinator, one obtains
\[
(e^{-tH}f)(x) = \mathbb{E}_x\bigl[f(B_{T_t})\bigr].
\]
The underlying process is a subordinated Brownian motion, a pure-jump
L\'evy process\cite{Levy1954}. Its path measure is not Wiener measure but a different,
mutually singular law. Still, it is a genuine probability measure, and
thus scalar relativistic fields admit stochastic representations within
the Kolmogorov framework.

\subsection{Dirac fields and Grassmann variables}

Dirac fields are fermionic and must be described using Grassmann variables.
The Euclidean action
\[
S_E[\bar\psi,\psi] = \int d^4x\,\bar\psi(\gamma_E^\mu\partial_\mu+m)\psi
\]
appears in the functional integral
\[
Z = \int \mathcal{D}\bar\psi\,\mathcal{D}\psi\,e^{-S_E[\bar\psi,\psi]}
= \det(\mathbb{D}_E).
\]
Here $\mathcal{D}\bar\psi\,\mathcal{D}\psi$ is a Berezin integral over an
infinite-dimensional Grassmann algebra \cite{Berezin1966,GlimmJaffe1987,PeskinSchroeder1995}.
This is an algebraic construction, not a countably additive measure on a
space of real-valued paths.

Fermionic fields satisfy canonical anticommutation relations; their
correlation functions are given by determinants or Pfaffians and obey
inequalities quite different from those of classical random variables. There
is no underlying probability space $(\Omega,\mathcal{F},\mathbb{P})$ and
family of commuting random variables $\psi(x)$ reproducing the same
correlations.

\begin{proposition}
There is no classical probability measure on any path space whose correlation
functions reproduce those of a fermionic (Dirac) field. Fermionic path
integrals must be formulated entirely within the Grassmann/Berezin framework.
\end{proposition}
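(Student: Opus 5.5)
The plan is to show that fermionic correlation functions violate a structural property that every family of moments of commuting classical random variables must have. That property is permutation symmetry together with positivity of second moments. Concretely, suppose there were a probability space $(\Omega,\mathcal{F},\mathbb{P})$ and measurable, complex-valued (or finite-dimensional vector-valued) functions $\psi(x),\bar\psi(x)$ on $\Omega$. Suppose further that their mixed moments $\mathbb{E}[\psi(x_1)\cdots\psi(x_n)\bar\psi(y_1)\cdots\bar\psi(y_n)]$ coincided with the fermionic (Berezin) correlations $\langle\psi(x_1)\cdots\bar\psi(y_n)\rangle$ computed from $Z=\det(\mathbb{D}_E)$. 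I would derive a contradiction in three steps.

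\emph{Step 1 (symmetry versus antisymmetry).} Classical random variables commute pointwise on $\Omega$, so every classical moment is symmetric under permutations of its arguments. By contrast, the Berezin integral gives correlations that are totally antisymmetric in the $x_i$. By Wick's theorem for Grassmann Gaussians they equal $\det[S(x_i-y_j)]$ (up to normalisation), where $S=\mathbb{D}_E^{-1}$. A function that is both symmetric and antisymmetric in $(x_1,x_2)$ must vanish. So matching the two requires
\[
\det\begin{pmatrix}S(x_1-y_1)&S(x_1-y_2)\\ S(x_2-y_1)&S(x_2-y_2)\end{pmatrix}=0
\]
for all points. I would contradict this with the explicit free propagator: $S(x)=(-\gamma_E^\mu\partial_\mu+m)G_m(x)$, where $G_m$ is the positive Euclidean Klein--Gordon Green function of Section~\ref{sec:kg-dirac}. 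Taking $x_1,x_2$ far apart and $y_j$ near $x_j$, the diagonal product dominates the off-diagonal one, so the determinant is nonzero.

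\emph{Step 2 (the nilpotency/Pauli obstruction, as a backup).} Setting $x_1=x_2=x$ in the antisymmetric correlation gives $\langle\psi(x)\psi(x)\cdots\rangle=0$, mirroring $\psi(x)^2=0$ in the Grassmann algebra. Suppose the symmetric-versus-antisymmetric argument is sidestepped by allowing smeared fields $\psi(f)$. I would then take $f=g$ real and use $\mathbb{E}[|\psi(f)|^2|\bar\psi(h)|^2]\ge 0$ together with Cauchy--Schwarz. On the fermionic side, the corresponding quantity is $\langle\psi(f)\psi(f)\bar\psi(h)\bar\psi(h)\rangle=0$ by nilpotency. The classical side forces $\psi(f)=0$ a.s. on the support of $\bar\psi(h)$, which contradicts the nonvanishing two-point function $\langle\psi(f)\bar\psi(h)\rangle=\langle f,Sh\rangle\neq0$ for suitable $f,h$.

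\emph{Step 3 (path-space formulation).} Finally, I would note that ``any path space'' only enters through the requirement that the $\psi(x)$ be classical measurable functions on $\Omega$. Steps 1--2 use nothing else, and Kolmogorov's theorem (Section~\ref{sec:prob-framework}) shows there is no loss in taking $\Omega$ to be a path space. The second sentence of the proposition then follows: any representation must retain anticommutation, hence a Grassmann algebra, and the Berezin integral is the canonical linear functional on it reproducing $\det(\mathbb{D}_E)$.

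The main obstacle is making Step 1 honest. Fermionic correlations involve the \emph{same} field at the two points, while a classical model might try to identify $\psi(x)$ with some non-obvious function of the sample. The precise claim to nail down is that every classical moment is symmetric, whereas the required determinant is not identically zero. I expect the nondegeneracy of $\det[S(x_i-y_j)]$ to be the only genuine computation. I would settle it with the large-separation asymptotics of $G_m$, namely exponential decay $e^{-m|x|}$, which makes the off-diagonal terms negligible.
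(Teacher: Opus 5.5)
Your Step~1 is correct, and it is the paper's own idea made precise. The paper's only justification is the remark just before the proposition: fermionic correlations are determinants/Pfaffians of anticommuting fields, so no family of commuting random variables on $(\Omega,\mathcal{F},\mathbb{P})$ can reproduce them. It gives no computation. You supply the missing argument. Every classical mixed moment is symmetric under $(x_1,\alpha_1)\leftrightarrow(x_2,\alpha_2)$, while the Berezin four-point function is antisymmetric under the same exchange. A match therefore forces the Berezin function to vanish, and it does not.

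Two refinements. First, write the four-point function componentwise, $S_{\alpha_1\beta_1}(x_1-y_1)S_{\alpha_2\beta_2}(x_2-y_2)-S_{\alpha_1\beta_2}(x_1-y_2)S_{\alpha_2\beta_1}(x_2-y_1)$, rather than as a determinant of a matrix with $4\times 4$ entries. Nonvanishing is then immediate. Take all spinor indices equal and $x_j-y_j$ purely spatial; the diagonal product is $m^2G_m(x_1-y_1)G_m(x_2-y_2)>0$ for $m>0$, while the cross terms decay exponentially in $|x_1-x_2|$. Second, the ``main obstacle'' you flag is not one, since any measurable function of the sample is a commuting random variable.

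What is worth stating is that going to four points is essential. On finitely many points the two-point function alone can be matched classically. Take $Z_k$ i.i.d.\ circular complex Gaussians and set $\psi=AZ$, $\bar\psi=BZ^{*}$. Then $\mathbb{E}[\psi_i\bar\psi_j]=(AB^{T})_{ij}$, which is arbitrary, and $\mathbb{E}[\psi\psi]=\mathbb{E}[\bar\psi\bar\psi]=0$. The paper's vague appeal to ``inequalities'' does not isolate this.

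Step~2 should be dropped; as written it fails. In the Euclidean theory $\psi$ and $\bar\psi$ are independent, so the classical representatives are complex-valued. The moment you are entitled to match is $\mathbb{E}[\psi(f)^2\bar\psi(h)^2]$, not $\mathbb{E}[|\psi(f)|^2|\bar\psi(h)|^2]$, and the vanishing of the former carries no positivity information. For example, take $\psi(f)=1$ and $\bar\psi(h)=1+\mathrm{i}V$ with $V$ real, centered, of unit variance. Then $\mathbb{E}[\psi(f)^2\bar\psi(h)^2]=0$ but $\mathbb{E}[\psi(f)\bar\psi(h)]=1$. Your Cauchy--Schwarz step works only for real-valued variables. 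Nor does smearing ``sidestep'' Step~1: $\langle\psi(f)\psi(g)\bar\psi(h)\bar\psi(k)\rangle$ is antisymmetric in $(f,g)$, and Step~1 applies verbatim.

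Finally, two scope points. The first sentence of the proposition holds only as a statement about moments of the field itself. Fermionic \emph{density} correlations $\det[K(x_i,x_j)]$ are realized by classical determinantal point processes. The second sentence is a methodological claim that neither your Step~3 nor the paper proves; an operator (CAR) formulation would serve equally well.
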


This algebraic obstruction reinforces the analytic and probabilistic obstructions
seen above: even if one could somehow circumvent the distributional and
positivity issues, the fermionic nature of Dirac fields would still forbid
a classical Kolmogorov representation.

\section{Unified No-Go Theorem}
\label{sec:nogo}

\subsection{Kolmogorov extension revisited}

Kolmogorov's extension theorem requires a family of nonnegative, consistent
finite-dimensional distributions $p_{t_1,\dots,t_n}(x_1,\dots,x_n)$.
For Dirac evolution, any such family would have to:
\begin{itemize}
  \item be compatible with the matrix-valued distributional propagator $K(t,x)$;
  \item respect the hyperbolic, finite-speed character of the dynamics;
  \item be supported on a path space appropriate for spinor evolution;
  \item and satisfy the Markov property (or at least a consistent family
        of joint laws) under time composition.
\end{itemize}
We have seen that these requirements are mutually incompatible.

\subsection{Statement and proof outline}

\begin{theorem}[Unified no-go theorem for Dirac path measures]
There exists no $\sigma$-additive probability measure on any classical path
space (Wiener-like or otherwise) whose finite-dimensional distributions
reproduce the propagators or correlation functions of the Dirac equation in
$1+1$ or $3+1$ dimensions. In particular, there is no Kolmogorov-type
probabilistic representation of Dirac evolution.
\end{theorem}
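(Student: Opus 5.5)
The argument goes by contradiction. I assume there is a $\sigma$-additive probability measure $\mathbb{P}$ on some path space $\Omega\subset E^{[0,T]}$, with $E$ a complete separable metric space as in Section~\ref{sec:prob-framework}, whose finite-dimensional distributions reproduce Dirac evolution. Then I derive a contradiction separately for each reading of ``reproduce'': propagators, meaning one-time and two-time marginals, and correlation functions, meaning $n$-point functions. The key reduction is to make the claim precise in the only form a Kolmogorov representation can take. There must be a bounded measurable (possibly matrix-valued) functional $\Phi$, such that for all test spinors $\varphi$,
\[
\bigl(e^{-\mathrm{i}tH}\varphi\bigr)(x)=\mathbb{E}_x\bigl[\Phi(\omega)\,\varphi(\omega(t))\bigr].
\]
The case $\Phi\equiv 1$ is the Markov/Chapman--Kolmogorov case. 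Both the one-dimensional and the three-dimensional claims then reduce to properties of the kernel $K(t,\cdot)$ computed by the Fourier method of Section~\ref{sec:zastawniak}.

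\textbf{Step 1: propagator form.} Taking the time-$t$ marginal of $\Phi\,d\mathbb{P}_x$, the displayed identity says that each matrix entry of $K(t,x-\cdot)$ equals the push-forward of a finite complex measure of total variation at most $\sup|\Phi|$. This contradicts the Proposition of Section~\ref{sec:zastawniak}, which asserts that $K(t,\cdot)$ is not a finite complex measure for $t>0$.

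To make this airtight, and not rely only on the short-time expansion, I would use the exact formula $\widehat K(t,k)=\cos(\omega t)I-\mathrm{i}\,\sin(\omega t)\,\omega^{-1}(\alpha k+m\beta)$. The off-diagonal entry contains $\sin(\omega t)\,k/\omega$, which does not decay as $|k|\to\infty$. A finite measure has a bounded, uniformly continuous Fourier transform, so the real obstruction has to be located elsewhere. I would find it at the light cone. In position space the kernel contains the terms $\tfrac12(\delta(x-t)+\delta(x+t))$ together with $\sigma_3$-weighted terms. Singular parts of the form $\partial_x\delta$ arise after composing with $H$, as in the Taylor expansion.

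The cleanest route, which I will commit to, is to test the representation on $H\varphi$. Composing with the generator and using strong differentiability (Fact~1 of Section~\ref{sec:prob-framework}) at $t=0$, the map $\varphi\mapsto \mathrm{i}\partial_t|_{0}\mathbb{E}_x[\Phi\varphi(\omega(t))]$ must equal $-\mathrm{i}\sigma_3\varphi'(x)+m\sigma_1\varphi(x)$. For $\Phi\equiv1$ and a positive measure, the generator of a positivity-preserving contraction semigroup satisfies the positive maximum principle. The operator $-\mathrm{i}\sigma_3\partial_x$ is complex and matrix-valued, so it violates that principle: it cannot map real nonnegative functions that attain a maximum at $x$ to something with nonpositive real part. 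This rules out the Markov case.

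\textbf{Step 2: correlation functions.} For general bounded $\Phi$, I would invoke the Bochner--Minlos proposition of Section~\ref{sec:minkowski}. The generating functional of a probability measure must be positive-definite. The Dirac two-point function, however, is antisymmetric under exchange (CAR). Commuting classical random variables have symmetric correlations $\mathbb{E}[X_fX_g]=\mathbb{E}[X_gX_f]$. This yields an immediate contradiction with the fermionic Proposition of Section~\ref{sec:kg-dirac} as soon as the antisymmetric part is nonzero, which can be checked from $\{\psi(f),\psi^*(g)\}=\langle f,g\rangle$.

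\textbf{Step 3: dimension.} I would transfer the argument to $3+1$ dimensions by restricting to initial data depending on one coordinate, so that $\alpha\cdot\nabla$ reduces to $\alpha_1\partial_1$.

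\textbf{Main obstacle.} I expect the hardest point to be the general weighted case $\Phi\not\equiv1$ in Step 1. There the positive maximum principle is no longer available, because the weighted semigroup need not preserve positivity. A bare complex measure representation is not immediately contradictory at fixed $t$, since $K(t,\cdot)$, being of $\delta$ type on the light cone, is plausibly a measure. My plan there is to first exploit the uniform bound $\sup_t|\Phi\,d\mathbb{P}_x|\le C$. Differentiating at $t=0$ then forces the generator to be bounded on $C_b$ in the weak sense, which contradicts the unboundedness of $\partial_x$, in the spirit of Vitali--Hahn--Saks. If this fails, I would restrict the theorem to the Markov/positive-weight class, which is what ``Kolmogorov-type'' should mean.
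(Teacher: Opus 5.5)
\textbf{The weighted case is false in $1+1$, not just hard.} The step you flag as the main obstacle cannot be closed, because a bounded-weight representation exists in $1+1$ dimensions. Let $(X_t,J_t)$ be the Kac velocity-jump process on $\mathbb{R}\times\{1,2\}$: velocity $-1$ in state $1$, velocity $+1$ in state $2$, label flips at rate $m>0$. Let $N_t$ be its number of flips by time $t$. Then
\[
(e^{-\mathrm{i}tH}\varphi)_j(x)=e^{mt}\,\mathbb{E}_{(x,j)}\bigl[(-\mathrm{i})^{N_t}\,\varphi_{J_t}(X_t)\bigr],
\]
since the right-hand side is a semigroup with generator $(-\sigma_3\partial_x-m)-\mathrm{i}m\sigma_1+m=-\mathrm{i}H$. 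This is a bounded weight over a genuine probability measure: the Poisson form of Feynman's checkerboard (Gaveau--Jacobson--Kac--Schulman, Ichinose).

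Consistently, the exact $1+1$ kernel is $\mathrm{diag}(\delta(x-t),\delta(x+t))$ plus a bounded Bessel-type density supported in $|x|\le t$. It is therefore a finite matrix-valued measure for every $t>0$. The $\partial_x\delta$ of the short-time expansion is only the Taylor expansion of $\delta(x\mp t)$ in $t$. So the Proposition you invoke at the start of Step 1 (on which the paper's item (2) also rests) fails in $1+1$. Your Vitali--Hahn--Saks idea also fails in principle, because uniformly bounded kernels can have unbounded generators: the translations $\delta_{x-t}$ generate $-\partial_x$. You must take your fallback $\Phi\equiv1$ (or $\Phi\ge0$), which is in any case what ``a probability measure whose finite-dimensional distributions reproduce the propagator'' means.

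\textbf{The $3+1$ case and Step 3.} In $3+1$ the kernel genuinely is not a finite measure. This is Zastawniak's theorem, and the mechanism is the non-commutativity of the $\alpha_j$; cf.\ Brenner's theorem that a symmetric hyperbolic system is $L^p$-well-posed for some $p\neq2$ only if its coefficient matrices commute. There your opening argument in Step 1 does work, once that fact is proved this way rather than from the short-time expansion. But Step 3, by restricting to data depending on one coordinate, discards exactly this non-commutativity. It lands you in an effectively $1+1$ problem where the weighted representation above exists, so it cannot yield the $3+1$ obstruction.

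\textbf{The Markov case uses the wrong mechanism.} The generator is $-\mathrm{i}H=-\sigma_3\partial_x-\mathrm{i}m\sigma_1$, not $H$. Its first-order part is real and satisfies the positive maximum principle, since it generates deterministic transport on $\mathbb{R}\times\{1,2\}$. At a maximum point $(x_0,j_0)$ of a real $\varphi$ one has $\operatorname{Re}(-\mathrm{i}H\varphi)_{j_0}(x_0)=0$, so the inequality you say is violated is not. What actually fails is reality, and only through the mass term. With $\Phi\equiv1$ and $\mathbb{P}\ge0$, real $\varphi$ gives real $\mathbb{E}_x[\varphi(\omega(t))]$. In the representations the paper uses, $e^{-\mathrm{i}tH}$ does not preserve real spinors; in $1+1$, $\operatorname{Im}\,e^{-\mathrm{i}tH}\varphi=-tm\sigma_1\varphi+O(t^2)\neq0$. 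That is the whole Markov-case proof, and it needs $m\neq0$. For $m=0$ the $1+1$ flow $\psi_1(t,x)=\varphi_1(x-t)$, $\psi_2(t,x)=\varphi_2(x+t)$ is reproduced by deterministic paths on $\mathbb{R}\times\{1,2\}$, which your argument would wrongly exclude.

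\textbf{Step 2.} This step concerns correlation functions, so it cannot rescue the weighted propagator case; it is essentially the paper's Grassmann item. Justifying it through $\{\psi(f),\psi^*(g)\}=\langle f,g\rangle$ proves too much. CCR vacuum expectations are just as non-symmetric ($\langle a(f)a^*(g)\rangle=\langle f,g\rangle$ while $\langle a^*(g)a(f)\rangle=0$), yet scalar fields do admit classical representations. The contradiction has to be located in Euclidean Berezin moments. For example, $\langle\psi(x_1)\psi(x_2)\bar\psi(y_2)\bar\psi(y_1)\rangle=\det[S(x_i,y_j)]_{i,j=1,2}$ is not identically zero and is antisymmetric in $x_1\leftrightarrow x_2$, whereas any classical moment $\mathbb{E}[\Psi(x_1)\Psi(x_2)\cdots]$ is symmetric.
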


\begin{proof}[Sketch of proof]
The obstructions stem from:
\begin{enumerate}
  \item \textbf{Minkowski/Euclidean structure:} In Minkowski signature,
        $e^{\mathrm{i}S}$ is purely oscillatory and fails to define a finite
		measure; in Euclidean signature, $\mathbb{D}_E$ is complex and
		$\det(\mathbb{D}_E+m)$ is sign-indefinite, so $e^{-S_E}$ is not a positive
        density.
  \item \textbf{Distributional kernels:} The Dirac propagator $K(t,x)$
        contains derivatives of $\delta$; it acts on test functions
        via $\varphi(0)$ and $\varphi'(0)$, which cannot be represented by
        integration against any finite measure.
  \item \textbf{Path geometry:} Hyperbolic equations require finite-speed,
        piecewise $C^1$ paths, while Wiener paths are nowhere differentiable;
        the corresponding measures are mutually singular.
  \item \textbf{Grassmann algebra:} Fermionic fields require anticommuting
        variables and Berezin integration; no classical probability space
        can support anticommuting random variables with Dirac correlations.
\end{enumerate}
Each item alone suffices to preclude a Kolmogorov construction; taken together
they form a robust no-go argument.
\end{proof}

\section{Relation to Serva and Other Work (Brief Remarks)}
\label{sec:serva}

De Angelis and Serva \cite{AngelisServa1992} analyzed imaginary-time path
integrals for the Klein--Gordon equation\cite{BjorkenDrell},\cite{KleinGordonOriginal}
 and showed that the correct
probabilistic object is not Wiener measure but a subordinated Brownian motion
associated with the relativistic Hamiltonian. This aligns with our discussion
of scalar relativistic fields in Section~\ref{sec:kg-dirac}.

Serva's more recent work \cite{Serva2021} constructs Lorentz-invariant
finite-speed stochastic processes for relativistic particles. These processes
live on path spaces of bounded-velocity trajectories and are mutually singular
with Wiener measure, consistent with our geometric discussion of hyperbolic
paths. These works treat scalar or particle-level processes and do not claim
a stochastic representation for Dirac fields, so there is no contradiction
with the no-go theorem here.

\section{Conclusions and Outlook}

We have given a unified, probabilistically oriented explanation of why Dirac-type
equations cannot be realized as classical stochastic processes on path spaces.
Two main ingredients were emphasized:
\begin{itemize}
  \item the \emph{Minkowski/Euclidean obstruction}, arising from the hyperbolic
        nature of the Dirac equation, the oscillatory character of
        $e^{\mathrm{i}S}$, and the non-positivity of $e^{-S_E}$;
  \item the \emph{Zastawniak obstruction}, arising from the distributional
        structure of the Dirac propagator and its incompatibility with
        Markov transition densities and Kolmogorov consistency.
\end{itemize}
These analytic and probabilistic obstructions are reinforced by the algebraic
fact that Dirac fields are fermionic and must be treated using Grassmann
variables and Berezin integration.

For scalar bosonic fields (Klein--Gordon) and for telegrapher-type hyperbolic
equations, one can construct genuine probability measures (subordinated
Brownian motions and velocity-jump processes) and exploit them for analytical
and numerical purposes. In future work, telegrapher equations will serve as
a testbed for Monte Carlo simulation of hyperbolic PDEs, allowing a direct
comparison between probabilistic and deterministic numerical schemes.

In contrast, for Dirac fields no such classical probabilistic representation
exists. The path integral must be understood as a Grassmann/Berezin integral,
and quantum simulation of Dirac dynamics requires fundamentally different
techniques than those used for parabolic or telegrapher-type equations.

\section{Acknowledgements}

The author thanks Dr Andrzej Korzeniowski (Department of Mathematics, The University of Texas at Arlington, USA) and 
Dr John L. Fry (Department of Physics, The University of Texas at Arlington, USA) for helpful discussions and references that contributed to this work.

The author dedicates this work in loving memory of her late husband, Dr Radhika Prosad Datta, whose constant inspiration made this work possible.\\

\end{document}